\documentclass[journal]{IEEEtran}
\usepackage{amssymb}
\usepackage{amsmath} 
\usepackage{amsthm}
\usepackage{xcolor}
\usepackage{cite}
\usepackage{booktabs}
\usepackage{caption} 
\usepackage{graphicx} 
\usepackage{threeparttable}
\usepackage{subcaption} 
\usepackage{hyperref}
\theoremstyle{remark}
\newtheorem{remark}{Remark}
\theoremstyle{definition}
\newtheorem{assumption}{Assumption}
\newtheorem{theorem}{Theorem}
\usepackage{tabularx}
\newcolumntype{L}{>{\raggedright\arraybackslash}X}
\usepackage{array, threeparttable, tabularx, booktabs, caption}

\begin{document}
\title{Hierarchical Decision-Making under Uncertainty: A Hybrid MDP and Chance-Constrained MPC Approach}

\author{Siyuan~Li,~\IEEEmembership{Member,~IEEE,}
        Chengyuan~Liu*,~\IEEEmembership{Senior Member,~IEEE,}
        and~Wen-hua~Chen,~\IEEEmembership{Fellow,~IEEE}
\thanks{Siyuan Li and Chengyuan Liu are with the Department of Aeronautical and Automotive Engineering, Loughborough University, Loughborough, UK. (e-mail: s.li8@lboro.ac.uk, c.liu10@lboro.ac.uk).}
\thanks{Wen-hua Chen is with the Department of Aeronautical and Aviation Engineering, The Hong Kong Polytechnic University, Hong Kong. ( e-mail: wenhua.chen@polyu.edu.hk).}
}

\markboth{Journal of \LaTeX\ Class Files,~Vol.~14, No.~8, Dec~2025}%
{Shell \MakeLowercase{\textit{et al.}}: Bare Demo of IEEEtran.cls for IEEE Journals}
\maketitle

\begin{abstract}
This paper presents a hierarchical decision-making framework for autonomous systems operating under uncertainty, demonstrated through autonomous driving as a representative application.
Surrounding agents are modeled using Hybrid Markov Decision Processes (HMDPs) that jointly capture maneuver-level and dynamic-level uncertainties, enabling the multi-modal environmental prediction.
The ego agent is modeled using a separate HMDP and integrated into a Model Predictive Control (MPC) framework that unifies maneuver selection with dynamic feasibility within a single optimization.
A set of joint chance constraints serves as the bridge between environmental prediction and optimization, incorporating multi-modal environment predictions into the MPC formulation and ensuring safety across all plausible interaction scenarios. 
The proposed framework provides theoretical guarantees on recursive feasibility and asymptotic stability, and its benefits in terms of safety and efficiency are validated through comprehensive evaluations in highway and urban environments, together with comparisons against a rule-based baseline.
\end{abstract}

\begin{IEEEkeywords}
Autonomous systems, decision-making, hybrid markov decision processes (HMDPs), model prediction control (MPC), autonomous driving 

\end{IEEEkeywords}

\IEEEpeerreviewmaketitle

\section{Introduction}\label{I}

Autonomous systems operate in uncertain and dynamically evolving environments shared with humans, robots, and other intelligent agents~\cite{yang2025recent}. To make reliable decisions under such uncertainty, they must not only perceive the scene accurately but also reason about the behavioral uncertainty of neighboring agents, which manifests at two coupled levels: the maneuver level (e.g., lane-change intent, route choice, or task assignment) and the dynamic level (e.g., velocity, heading, and pose). Capturing and predicting these uncertainties are essential for anticipating how surrounding agents (SAs) will interact with the ego system. This necessitates robust environment modeling, which forms the foundation for reliable decision-making under uncertainty.

Recent studies on environment modeling have made notable progress along two complementary research lines, focusing respectively on maneuver-level and dynamic-level uncertainty.
Regarding  maneuver-level uncertainty, the modeling approaches have evolved from early physics-based motion models\cite{uhlemann2024constantvelocity,ivanchev2019constantacc}, which fail to capture strategic maneuver shifts, to more expressive maneuver reasoning frameworks. Logic-based methods explicitly treat interacting agents as rational decision-makers that optimize certain objectives, enabling intuitive and interpretable reasoning about their high-level intentions~\cite{bai2015intention,chen2025intention}. In parallel, discriminative deep models, such as LSTMs~\cite{wang2021intelligent}, Transformers~\cite{li2024behavior}, or GNNs~\cite{zhang2023bip}, learn maneuver patterns from motion sequences and interaction cues for intent classification. While these methods offer high accuracy in identifying maneuver modes, their outputs are often decoupled from downstream trajectory generation, lacking structured integration.

For dynamic-level uncertainty, classical approaches in control theory are typically grounded in stochastic system theory. For example, Kalman filters and their nonlinear extensions provide real-time state estimates and the associated uncertainty covariance \cite{carvalho2014stochastic,xu2014motion}; reachability analysis offers formal boundary guarantees for dynamics prediction, especially in safety-critical verification scenarios~\cite{xu2022reachability,shetty2019predicting}. In recent years, the emergence of deep generative models has fueled the development of data-driven approaches. Representative architectures such as LSTMs~\cite{li2022combining}, Transformers~\cite{ma2024multiagent}, diffusion models~\cite{wang2024trajectory}, and CVAEs~\cite{ivanovic2020multimodal} generate multi-modal and nonlinear distributions over future dynamics, capturing complex motion patterns and interaction features. However, both control-based and learning-based models often rely on predefined or independently inferred  maneuvers as a prerequisite. They typically lack an integrated framework that connects maneuver reasoning with dynamics forecasting, and thus fail to explain the decision-making motivations underlying agent maneuver.

Overall, most existing studies treat maneuver and dynamics forecasting as decoupled subproblems, limiting the decision-making system's ability to reason about joint risk in a coherent manner. Therefore, developing a structured prediction framework that jointly models both maneuver-level and dynamic-level uncertainty, while maintaining interpretability and planning compatibility, is a critical step toward reliable environment understanding.

While accurate environmental prediction is essential, safe and goal-directed autonomy ultimately depends on how these environmental predictions are incorporated into the decision-making process. Existing decision-making paradigms can be broadly categorized into rule-based, learning-based, and model-based approaches. Rule-based strategies~\cite{pickering2025narrative,xu2022integrated,kang2025decision} are widely adopted for their simplicity and interpretability, but they rely heavily on manually designed heuristics, making them rigid and reactive in uncertain environments. Learning-based methods, such as reinforcement learning~\cite{zhang2025decision,wu2025reinforcement} and imitation learning~\cite{andreychuk2025mapf,yang2025optimization}, can achieve strong empirical performance but often suffer from limited interpretability and lack of safety guarantees. Model-based frameworks, particularly those based on Model Predictive Control (MPC)~\cite{wang2022expert,caregnato2023real,yang2025interactive}, provide a principled optimization structure that explicitly accounts for system dynamics and safety constraints, offering better transparency and adaptability. 

Despite their advantages, existing MPC-based frameworks still exhibit structural limitations in representing the ego agent (EA)’s decision-making process.
Specifically, most existing MPC-based decision frameworks treat the EA’s maneuver selection and motion planning as separate processes, without explicitly constraining high-level maneuver decisions by the agent's dynamic feasibility. Moreover, environment-derived predictions are rarely incorporated in an explicit form, limiting the planner’s ability to reason about future interaction uncertainties. Therefore, developing a unified decision-making framework that couples the EA’s maneuver reasoning and dynamic feasibility with environmental predictions within a MPC framework remains an open challenge.

To tackle these challenges, this paper proposes a hierarchical decision-making framework that combines Hybrid Markov Decision Processes (HMDPs)\cite{wang2025high} with MPC. In this framework, each SA is represented as an HMDP that explicitly distinguishes and jointly models maneuver-level and dynamic-level uncertainties. The maneuver-level uncertainty is captured by a probabilistic action-selection distribution that encodes the likelihood of discrete maneuver transitions, whereas the dynamic-level uncertainty is represented by Gaussian process noise added to the state evolution, thereby accounting for stochastic deviations around nominal motion. To further mitigate maneuver uncertainty, a likelihood-based filtering mechanism suppresses low-probability transitions over both single-step and multi-step horizons, yielding compact multi-modal predictions that retain only the most plausible interaction outcomes, which are subsequently propagated to the EA’s decision layer as inputs for optimization.

Building upon the structured environment representation, the EA’s decision-making process is formulated in a hybrid manner consistent with that of the environment model. Unlike the generative HMDPs used for SAs, its HMDP serves as a decision structure. Discrete maneuver options are treated as decision variables, each explicitly linked to the EA’s dynamic model. Embedding this hybrid structure into the MPC formulation couples maneuver selection with continuous dynamics, allowing the dynamic feasibility of each maneuver to be explicitly evaluated and enforced through the optimization constraints. This integration eliminates inconsistency between high-level maneuver planning and low-level motion control, yielding decisions that are both dynamically feasible and behaviorally consistent.

The multi-modal predictions generated by the environmental HMDPs are incorporated into the MPC optimization via joint chance constraints, which link environment prediction with decision-making and enforce probabilistic safety across all retained scenarios. Under standard affine-Gaussian assumptions, these chance constraints admit a deterministic reformulation, allowing environment-derived safety margins to be efficiently enforced. Overall, the proposed framework unifies environment prediction and decision-making within a coherent optimization framework, producing decisions that are both forward-looking and adaptive to the stochastic evolution of the surrounding environment.

The main contributions of this paper are as follows:
\begin{itemize}
    \item An HMDP-based environment modeling framework is developed for SAs, which jointly captures maneuver-level and dynamic-level uncertainties to generate structured, multi-modal  predictions of their future behaviors. 
    \item  A unified HMDP-MPC framework is developed for the EA, integrating maneuver selection with dynamic feasibility and embedding multi-modal environment predictions through joint chance constraints for probabilistically safe planning.
    \item Theoretical guarantees on recursive feasibility and asymptotic stability are established for the proposed decision-making framework. 
    \item Comprehensive simulations in both highway and urban autonomous driving scenarios are performed to validate the effectiveness of the proposed framework and to demonstrate its key differences from a rule-based baseline.
\end{itemize}

The rest of the paper is organized as follows:
Section~\ref{II} introduces the proposed hierarchical decision-making framework. Section~\ref{III} provides theoretical proofs of recursive feasibility and asymptotic stability. Section~\ref{IV} presents simulation results in the context of an autonomous driving scenario. Section~\ref{V} offers a discussion of the results and outlines potential directions for future research.

\section{Hierarchical Decision-Making Framework}\label{II}

Building upon the modeling principles introduced in Section~\ref{I}, this section formulates the proposed hierarchical decision-making framework. As illustrated in Fig.~\ref{fig:framework}, data from the Perception Layer provide the current states of the EA and all SAs. These states form the initial conditions for both the environment side and the decision-making layer. On the ego side, the EA HMDP Modelling block (Section~\ref{B}) specifies the EA’s discrete maneuver modes and action-dependent dynamics for use in the MPC optimization. On the environment side, the SAs Modelling \& Prediction block uses HMDP models (Section~\ref{A}) to propagate these states and employs probabilistic filtering (Section~\ref{C}, Section~\ref{D}) to prune low-likelihood maneuver branches, yielding multi-step reachability sets that describe all plausible future evolutions of the SAs. These reachability sets are then passed to the Optimization Layer, where they define joint chance constraints that restrict the EA’s feasible decision space (Section~\ref{D}). The MPC solver (Section~\ref{E}) computes an action sequence that is dynamically feasible and probabilistically safe, and its first control input is sent to the Low-Level Layer.

\begin{figure}[htbp]
    \captionsetup{font=footnotesize}
    \centering
    \includegraphics[width=0.49\textwidth]{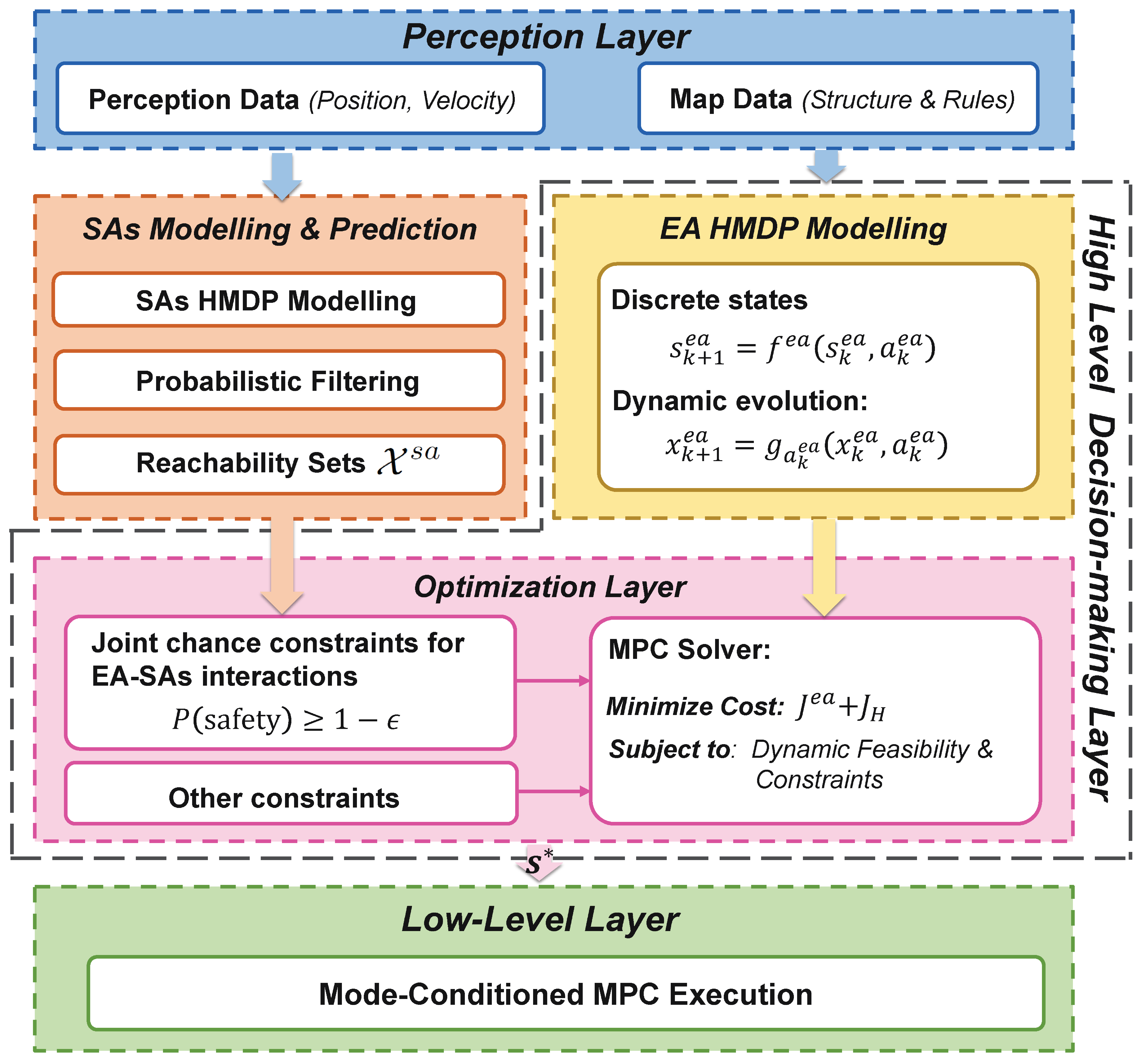}
    \caption{Schematic overview of the proposed hierarchical decision-making framework.}
    \label{fig:framework}
\end{figure}

\subsection{HMDP Formulation for the SAs}\label{A}

Unlike our previous work~\cite{wang2025high,li2024integrated}, which assumed fully observable environment dynamics, the present formulation explicitly incorporates maneuvers-level and  dynamic-level uncertainty through independent HMDPs for SAs. 

 A 3-tuple MDP, denoted \((S^{sa}, A^{sa}, \pi^{sa})\), is employed to model the maneuver evolution of SAs. The set \(S^{sa}\) contains discrete states, which represent high-level or abstract maneuvers, while \(A^{sa}\) denotes the set of actions that induce transitions between these states. The policy \(\pi^{sa}\) specifies a probability distribution over the action space \(A^{sa}\) conditioned on the current state \(s^{sa}\). More specifically, for the \(i\)-th SA at time step \(k\), the probability of selecting action \(a_k^{sa,i} \in A^{sa,i}\) given the current state \(s_k^{sa,i} \in S^{sa,i}\) is defined as:
\begin{equation*}
    \mathbb{P}(a_k^{sa,i} \mid s_k^{sa,i}) = \pi^{sa,i}(a_k^{sa,i} \mid s_k^{sa,i}).
\end{equation*}
The maneuver-level uncertainty is captured by the stochastic policy
$\pi^{sa,i}(a_k^{sa,i}\mid s_k^{sa,i})$, which assigns probabilities to feasible maneuvers. However, due to physical or structural constraints, some actions are rendered infeasible in specific states, resulting in $\mathbb{P}(a_k^{sa,i} \mid s_k^{sa,i}) = 0$.
This motivates defining a state-dependent admissible action set, restricting the policy to actions that are feasible in a given state. Formally, for the \(i\)-th SA, we have
\begin{equation*}\label{eq:Asetphysical}
    A^{sa}(s_k^{sa,i}) \subseteq A^{sa},
\end{equation*}
where \(A^{sa}(s_k^{sa,i})\) denotes the set of actions feasible at state \(s_k^{sa,i}\), excluding transitions that violate system-level or physical feasibility conditions, thereby ensuring that the resulting policy remains realizable.

Once a feasible action is selected according to the policy, the next discrete state evolves according to a deterministic transition function:
\begin{equation*}\label{eq:sastatetransition}
    s_{k+1}^{sa,i} = f^{sa}(s_k^{sa,i}, a_k^{sa,i}), \quad s_{k+1}^{sa,i}\in S^{sa,i}
\end{equation*}
which captures predefined transition rules governing the discrete-mode evolution.

\begin{remark}
This formulation can be extended to stochastic transitions by introducing uncertainty into the transition function \(f^{sa}\), or by modeling the next state as a distribution conditioned on the current state-action pair:
\begin{equation*}
s_{k+1}^{sa,i} \sim \mathbb{P}(\cdot \mid s_k^{sa,i}, a_k^{sa,i}).
\end{equation*}
In other words, although transitions are designed to follow predefined rules, stochasticity may arise from environmental variability, modeling approximations, or external disturbances, leading to deviations from the deterministic maneuver. For clarity, this work adopts deterministic mode transitions, however, the same framework can accommodate stochastic transitions by replacing $f^{sa}$ with a transition kernel $\mathbb{P}(\cdot|s,a)$.
\end{remark}

Building upon the MDP formulation of discrete maneuver modes, we further specify a set of action-dependent dynamics, enabling the framework to explicitly capture how each high-level decision drives the dynamic evolution of the agent's physical states. The evolution of the \(i\)-th SA can then be represented by the following affine dynamic model
\begin{equation}\label{eq:sudy}
    x_{k+1}^{sa,i} = \mathsf{A}_{k}^{sa,i}(a_k^{sa,i}) x_k^{sa,i} + \mathsf{B}_{k}^{sa,i}(a_k^{sa,i}) \mathsf{F}_k^i(a_k^{sa,i}) + w_k^i,
\end{equation}
where \(x_k^{sa,i} \in \mathbb{R}^n\) denotes the state of the \(i\)-th SA at time step \(k\), \(\mathsf{A}_k^{sa,i}(\cdot)\) and \(\mathsf{B}_k^{sa,i}(\cdot)\) are the system matrices associated with the selected action \(a_k^{sa,i}\), \(\mathsf{F}_k^i(\cdot)\) denotes a hypothetical control input associated with the selected action \(a_k^{sa,i}\), and $w_k^i$ represents stochastic modeling uncertainty arising from unmodeled dynamics and approximation errors. 
It is modeled as zero-mean Gaussian noise, $w_k^i \sim \mathcal{N}(0, \Xi_k^{sa,i}(a_k^{sa,i}))$, 
and is assumed independent across agents and time.

\begin{assumption}
We assume that the matrices \(\mathsf{A}_{k}^{sa,i}(\cdot)\) and \(\mathsf{B}_{k}^{sa,i}(\cdot)\), as well as the mode-dependent input \(\mathsf{F}_k^i(\cdot)\), are known. 
\end{assumption}
It is worth to note that the input \(\mathsf{F}_k^i(\cdot)\) is not the actual control input applied by the SAs, but a hypothetical proxy signal inferred from high-level maneuver features associated with action \(a_k^{sa,i}\). 
\begin{remark}
The formulation in \eqref{eq:sudy} adopts an action-dependent representation of system dynamics, where the matrices \(\mathsf{A}_k^{sa,i}(\cdot)\), \(\mathsf{B}_k^{sa,i}(\cdot)\), and \(\mathsf{F}_k^i(\cdot)\) are parameterized by the discrete action \(a_k^{sa,i}\). However, in certain scenarios, the system dynamics may primarily depend on the state. In such cases, these matrices can alternatively be expressed as functions of the state \(s_k^{sa,i}\), or jointly as functions of both the state and the action. This modeling choice depends on the specific application context and is not strictly deterministic.
\end{remark}

\begin{remark}
We adopt a linear affine function to describe the action-dependent dynamics of SAs for modeling simplicity without compromising generality. However, nonlinear dynamics may arise in practical cases, for example:
\begin{equation*}
    x_{k+1}^{sa,i} = g_{a_k^{sa,i}}(x_k^{sa,i}, \mathsf{F}_k^i(a_k^{sa,i})) + w_k^i,
\end{equation*}
where \(g_{a_k^{sa,i}}(\cdot)\) is a nonlinear function associated with the action \(a_k^{sa,i}\), which can be approximated using linearization methods. In particular, techniques such as Taylor series expansion around nominal operating points are commonly used to obtain linearized models. Therefore, adopting a linear affine representation is both reasonable and accepted in practice.
\end{remark}

Due to the complexity of the environment, it is necessary to consider all SAs that may influence the EA. To better characterize the interaction between the EA and all relevant SAs, we define the aggregated external state at time \(k+1\) as:
\begin{equation*}
Z_{k+1}^{sa} := \left\{ s_{k+1}^{sa,i}, x_{k+1}^{sa,i} \right\}_{i=1}^{I},
\end{equation*}
where \(I\) denotes the number of SAs  included in the environment model. This aggregated representation \(Z_{k+1}^{sa}\) provides the necessary information for the EA to evaluate interaction constraints and determine the set of feasible actions.

\subsection{HMDP Formulation for the EA}\label{B}

In contrast to the modeling of SAs, the EA integrates both its internal state information and the aggregated external state \(Z^{sa}\) (see Section~\ref{A}), so that the policy simultaneously considers ego performance and interaction safety. 
The EA's HMDP is formulated as:
\begin{equation*}
 \left(S^{ea}, A^{ea}, X^{ea}, Z^{sa}, J^{ea}\right),
\end{equation*}
where \(S^{ea}\) denotes the set of ego maneuvers, \(A^{ea}\) is the set of candidate actions, \(X^{ea} \subset \mathbb{R}^n\) represents the EA's dynamic states space, \(Z^{sa}\) captures the state information of relevant SAs, and \(J^{ea}\) is a cost function used to evaluate each action.

As discussed earlier, not all actions are admissible in every discrete mode \( s_k^{ea} \in S^{ea} \) at time step \(k\). Therefore, each state \( s_k^{ea} \) is associated with a subset of feasible actions:
\begin{equation}\label{eq:eaactionset}
A^{ea}(s_k^{ea}) \subseteq A^{ea}.
\end{equation}

Given a feasible action \( a_k^{ea} \in A^{ea}(s_k^{ea}) \), the EA transitions to the next discrete state according to a deterministic transition function:
\begin{equation}\label{eq:satransition}
s_{k+1}^{ea} = f^{ea}(s_k^{ea}, a_k^{ea}), \quad s_{k+1}^{ea}\in S^{ea}.
\end{equation}

Meanwhile, the EA's dynamic state evolves according to action-dependent dynamics:
\begin{equation}\label{eq:eadynamic}
x_{k+1}^{ea} = g_{a_k^{ea}}(x_k^{ea}, a_k^{ea}),
\end{equation}

where \( g_{a_k^{ea}}(\cdot) \) denotes the dynamic function conditioned on the discrete action \( a_k^{ea} \). For instance, if \( a_k^{ea} \) corresponds to a left lane change for autonomous vehicles, then \( g_{a_k^{ea}}(\cdot) \) characterizes lateral polynomial trajectory generation. This action-conditioned dynamic formulation establishes a direct mapping between maneuver selection and dynamic evolution, enabling the dynamic feasibility of each maneuver to be evaluated and enforced within the optimization framework.

In addition to internal constraints, the EA’s feasible actions are further restricted by external conditions derived from the aggregated external state \(Z_{k+1|k}^{sa}\). For each candidate action, the evolution of the EA’s dynamic state can be predicted via~\eqref{eq:eadynamic}, enabling the assessment of potential interactions with SAs. The externally constrained feasible action set at state \(s_k^{ea}\) is therefore defined as
\begin{equation*}
A_{\text{fea}}^{ea}(s_k^{ea}, Z_{k+1|k}^{sa}) \subseteq A^{ea}(s_k^{ea}) \subseteq A^{ea},
\end{equation*}
where \(Z_{k+1|k}^{sa}\) denotes the one-step-ahead predicted states of SAs.
\begin{remark}
The external constraints considered here may arise from safety margins imposed by the environment. For instance, for autonomous driving, if a SV is predicted to occupy a neighboring lane in the near future, a lane change action by the EA may lead to a collision risk, rendering it infeasible. In this case, such an action would be excluded from the externally constrained feasible action set \( A_{\text{fea}}^{ea}(s_k^{ea}, Z_{k+1|k}^{sa}) \). The mechanism for incorporating environment information into action feasibility assessment will be detailed in the next section.
\end{remark}

\subsection{Probabilistic Interaction Modeling with Chance Constraints}\label{C}

For SAs, it is essential to predict the possible maneuver ranges associated with each action \(a_k^{sa,i} \in A^{sa}(s_k^{sa,i})\), so as to capture diverse maneuver possibilities and facilitate informed decision-making. However, in certain cases, the action space \(A^{sa}(s_k^{sa,i})\) is considerably large, thereby increasing the number of state-action sequences to be evaluated and resulting in substantial computational demands.

To address this, we propose an action filtering mechanism based on policy distribution. In practice, many actions within the admissible set have negligible probability. This filtering mechanism focuses on plausible and high-likelihood interactions by discarding low-probability actions. More specifically, we define a threshold \( \delta \), and retain only those actions \( a_k^{sa,i} \in A^{sa}(s_k^{sa,i}) \) whose policy probabilities exceed \( \delta \). The resulting filtered action set is defined as:
\begin{equation*}
\tilde{A}^{sa}(s_k^{sa,i}) = \left\{ a \in A^{sa}(s_k^{sa,i}) \ \middle| \ \pi^{sa,i}(s_k^{sa,i}, a) \geq \delta \right\}.
\end{equation*}
 This naturally induces the following hierarchical relation:
\begin{equation*}
\tilde{A}^{sa}(s_k^{sa,i}) \subseteq A^{sa}(s_k^{sa,i}) \subseteq A^{sa}.
\end{equation*}

For each retained high-likelihood action \(a_k^{sa,i} \in \tilde{A}^{sa}(s_k^{sa,i})\), we propagate the dynamics forward to obtain the next state of the \(i\)-th SA:
\begin{equation}\label{eq:sadynamic}
\tilde{x}_{k+1}^{sa,i} = \mathsf{A}_{k}^{sa,i}(a_k^{sa,i}) \tilde{x}_k^{sa,i} + \mathsf{B}_{k}^{sa,i}(a_k^{sa,i}) \mathsf{F}_k^i(a_k^{sa,i}) + w_k^i.
\end{equation}

Accordingly, the set of all reachable surrounding dynamic states at time step \(k+1\) can be constructed as:
\begin{equation}\label{eq:sareachset}
\mathcal{X}_{k+1}^{sa,i} := \textstyle \bigcup_{a_{k}^{sa,i} \in \tilde{A}^{sa}(s_k^{sa,i})} \tilde{x}_{k+1}^{sa,i}.
\end{equation}

Then, the filtered external information is incorporated into the decision-making process to eliminate actions that may violate safety or other constraints. 

Since the future states of SAs are subject to inherent uncertainty and model mismatch represented by \( w_k^i \), we incorporate chance constraints into the EA’s decision process. This prevents overly conservative maneuver while still ensuring safety and other required guarantees.

Moreover, as each SA may execute multiple filtered admissible actions, the EA must ensure safety under all plausible future evolutions to guarantee robustness. Accordingly, for the \(i\)-th SAs, the feasible ego action set at time step \(k+1\) must satisfy the following chance constraint:
\begin{equation*}
\mathbb{P}\textstyle
\left(  \bigcap_{x_{k+1}^{sa,i} \in \mathcal{X}^{sa,i}_{k+1}}
    \left\{ h(x_{k+1}^{ea}, x_{k+1}^{sa,i})\geq 0\right\}\right) \geq 1 - \epsilon,
\end{equation*}

where \(h(\cdot)\) encodes application-specific constraints (e.g., safety margins), and \(\epsilon\) denotes a user-defined tolerance for allowable risk. For tractability, \(h(x_{k+1}^{ea}, x_{k+1}^{sa,i})\) is typically designed to be affine in the stochastic argument \(x_{k+1}^{sa,i}\). When \(h\) is nonlinear, we adopt a first-order Taylor approximation with respect to \(x_{k+1}^{sa,i}\) around a nominal prediction, yielding an affine surrogate on which the chance constraint is enforced.

\begin{remark}
Concretely, for a nominal prediction \(\bar{x}_{k+1}^{sa,i}\),
\begin{align*}
   h(x_{k+1}^{ea}, x_{k+1}^{sa,i})
&\approx 
h(x_{k+1}^{ea}, \bar{x}_{k+1}^{sa,i})\\
&+ \nabla h(x_{k+1}^{ea}, \bar{x}_{k+1}^{sa,i})^\top
\big(x_{k+1}^{sa,i}-\bar{x}_{k+1}^{sa,i}\big), 
\end{align*}
which provides an affine, computationally tractable surrogate. 
\end{remark}

Extending this to the multi-agent case, the EA must ensure joint safety with respect to all SAs:
\begin{align}\label{eq:chanceallsas}
\mathbb{P}\textstyle\left(
    \bigcap_{i=1}^{I}
   \bigcap_{x_{k+1}^{sa,i} \in \mathcal{X}^{sa,i}_{k+1}}
    \left\{ h(x_{k+1}^{ea}, x_{k+1}^{sa,i})\geq 0\right\}\right) \geq 1-\epsilon   
\end{align}

This formulation ensures that the EA remains safe under all likely  interactions, accounting for multi-agent uncertainty in a principled and tractable way.

\subsection{Chance-Constrained MPC Formulation}\label{D}

In practical decision-making scenarios, prediction over an extended horizon is crucial, as it provides richer contextual information and allows the system to anticipate future interactions. MPC is widely adopted in these settings because it optimizes control inputs over a finite prediction horizon while explicitly accounting for constraints. This predictive optimization structure naturally enables the integration of future information into real-time decision-making.

However, extending the prediction horizon inevitably leads to a combinatorial growth in the number of possible maneuver sequences. To address this, we extend the previously proposed filtering mechanism to a multi-step setting, where cumulative probability threshold is used to prune low-likelihood maneuver sequences. 

For each SA \( i \), we define the feasible action sequence set \( \mathcal{A}^{sa,i}_{\delta} \) under a cumulative probability constraint as:
\begin{equation}\label{eq:cumprobability}
\mathcal{A}^{sa,i}_{\delta} = \left\{ a_{1:H}^{sa,i} \,\middle|\, \prod_{k=1}^{H} \pi^{sa,i}(s_k^{sa,i}, a_k^{sa,i}) \geq \delta_{\text{seq}} \right\},
\end{equation}
where \( a_{1:H}^{sa,i} = [a_1^{sa,i}, \dots, a_H^{sa,i}] \) is a candidate action sequence of length \( H \), and \( \delta_{\text{seq}} \) is a scenario-level probability threshold. 
 
For a given action sequence \( a_{1:H}^{sa,i} \), the corresponding discrete state sequence \( \boldsymbol{s}_{2:H+1}^{sa,i}  = [s_2^{sa,i}, \dots, s_{H+1}^{sa,i}]\) is deterministically generated according to equation (\ref{eq:satransition}). The prediction of the SA's system state follows the same affine stochastic dynamics as introduced in (\ref{eq:sadynamic}), which serve as the prediction model for forward propagation:
\begin{equation}\label{eq:saprediction}
    x_{k+1|k}^{sa,i} = \mathsf{A}_k^{sa,i}(a_k^{sa,i})\, x_{k|k}^{sa,i} + \mathsf{B}_k^{sa,i}(a_k^{sa,i})\, \mathsf{F}_k^{i}(a_k^{sa,i}) + w_k^i.
\end{equation}

To quantify the uncertainty associated with this prediction, we also propagate the state covariance through the linearized system. The covariance is updated by:
\begin{equation}\label{eq:coinvpro}
    Q_{k+1|k}^{sa,i} = \mathsf{A}_k^{sa,i}(a_k^{sa,i})\, Q_{k|k}^{sa,i} \left(\mathsf{A}_k^{sa,i}(a_k^{sa,i})\right)^\top+\Xi_k^{sa,i}(a_k^{sa,i}).
\end{equation}
Here, $Q^{sa,i}_{k|k}$ and $Q^{sa,i}_{k+1|k}$ denote the posterior and predicted state covariance of SA~$i$, respectively, while $\Xi^{sa,i}_k(a_k^{sa,i})$ is the process-noise covariance capturing model uncertainty.

For each retained action sequence \( a_{1:H}^{sa,i} \in \mathcal{A}_\delta^{sa,i} \), we denote the corresponding predicted state trajectory as:
\[
\tilde{x}^{sa,i}(a_{1:H}^{sa,i}) = [x_{2|1}^{sa,i}, x_{3|1}^{sa,i}, \dots, x_{H+1|1}^{sa,i}],
\]
which is derived from the dynamics defined in Equation~(\ref{eq:saprediction}). Then, the predicted dynamic state reachability set in Equation~(\ref{eq:sareachset}) becomes:
\begin{equation}\label{eq:seqset}
\mathcal{X}^{sa,i} := \textstyle \bigcup_{a_{1:H}^{sa,i} \in \mathcal{A}_\delta^{sa,i}} \tilde{x}^{sa,i}(a_{1:H}^{sa,i}).
\end{equation}
To ensure safety under multi-step interaction uncertainty, the chance constraint~\eqref{eq:chanceallsas} is extended to:
\begin{equation}\label{eq:mixchanceconstraints}
\mathbb{P}\left(
\bigcap_{i=1}^{I}\bigcap_{j=1}^{H}
\bigcap_{\tilde{x}^{sa,i} \in \mathcal{X}^{sa,i}}
\left\{ h\left(x_{k+j|k}^{ea}, \tilde{x}_{k+j|k
}^{sa,i} \right) \geq 0 \right\}
\right)
\geq  1-\epsilon.
\end{equation}

Equations (\ref{eq:cumprobability})-(\ref{eq:mixchanceconstraints}) provide an abstract description of the probabilistic interaction model and the resulting multi-step
joint chance constraints. To give intuition, Fig.~\ref{fig:example} illustrates
this construction for a single surrounding vehicle (SV) in a three-lane scenario.

\begin{figure}[t]
    \captionsetup{font=footnotesize}
    \centering
    \includegraphics[width=0.49\textwidth]{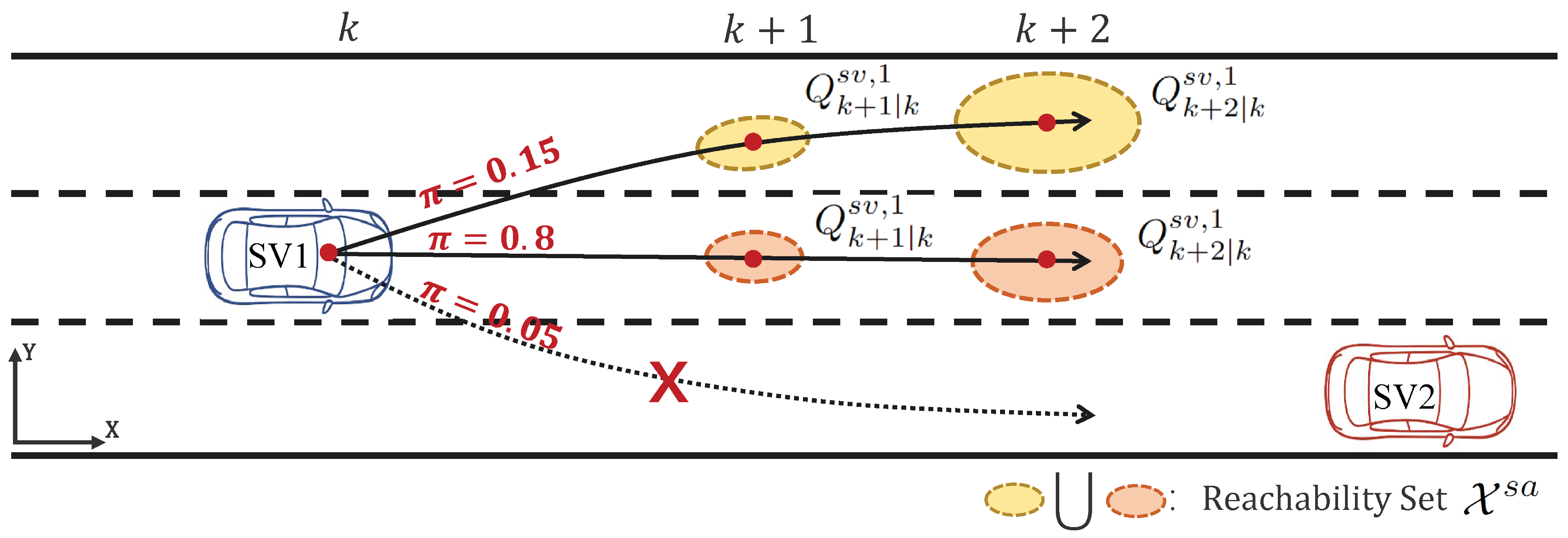}
     \caption{Illustration of multi-modal prediction and reachability-set construction of a SV. Low-probability  branch is
pruned, while the remaining hypotheses are propagated to obtain predicted means, covariances, and ellipsoidal reachable regions whose union approximates the reachability set $\mathcal{X}^{sa}$ used in the joint chance constraint~(\ref{eq:mixchanceconstraints}).}
    \label{fig:example}
\end{figure}

Having specified the probabilistic interaction model, the multi-step reachability sets, and the associated joint chance constraints, we now define the cost function used to select among the feasible actions. This cost function is formulated as
\begin{align*}
J^{ea}: S^{ea} \times A^{ea} \times X^{ea} \times \mathcal{X}^{sa} \rightarrow \mathbb{R}.
\end{align*}

In practical implementations, the specific structure of the cost function \( J^{ea} \) may vary depending on the application. Not all elements of the input tuple need to be explicitly used, the function can be simplified to depend only on the subset of variables relevant to the specific task.

At each time step \(k\), the EA solves a finite-horizon optimization problem over a prediction horizon \(H\), aiming to select a sequence of high-level actions \(\{a_k^{ea}, a_{k+1}^{ea}, \dots, a_{H-1}^{ea}\}\) that minimize the cumulative cost:
\begin{align*}
    \min_{a^{ea}} \sum_{j=0}^{H-1} &J^{ea}(s_{k+j+1|k}^{ea}, a_{k+j|k}^{ea}, x_{k+j+1|k}^{ea}, \mathcal{X}_{k+j+1|k}^{sa}) \\ & +J_H(s^{ea}_{k+H|k}, x^{ea}_{k+H|k})
\end{align*}

\text{s.t.}
\begin{equation}
   (\ref{eq:eaactionset})-(\ref{eq:eadynamic}),and \ (\ref{eq:cumprobability})-
  (\ref{eq:mixchanceconstraints}). \nonumber
\end{equation}

where $J_H(\cdot)$ represents a terminal cost associated with the finite-horizon problem, the explicit definition of which will be introduced in Section~\ref{III}.

\subsection{Reformulation for the MPC Optimization Program}\label{E}

The direct evaluation of chance constraints \eqref{eq:mixchanceconstraints} involves multivariate probability integrals, which are computationally prohibitive for real-time applications. To address this intractability, we reformulate these probabilistic constraints into deterministic algebraic surrogates under standard affine-Gaussian assumptions. This transformation allows safety margins to be efficiently computed from the predicted state means and covariances, thereby enabling the evaluation of candidate discrete decision sequences within the optimization framework.

Specifically, the affine constraint function \( h(\cdot) \) is expressed as:
\begin{equation*}
h(x_{k+1|k}^{ea}, \tilde{x}_{k+1|k}^{sa,i}) = c^{ea} x_{k+1|k}^{ea} + c^{sa} \tilde{x}_{k+1|k}^{sa,i} + c,
\end{equation*}
where \( c^{ea}, c^{sa}, \) and \( c \) are known constants. Moreover, \( \tilde{x}_{k+1|k}^{sa,i} \) is assumed to follow a Gaussian distribution with mean \( \bar{x}_{k+1|k}^{sa,i} \) and covariance \( Q_{k+1|k}^{sa,i} \).

Thus, the chance constraint:
\begin{equation}\label{eq:chance_constraints}
\mathbb{P} \left(
    h(x_{k+1|k}^{ea}, \tilde{x}_{k+1|k}^{sa,i}) \geq 0
\right) \geq 1 - \epsilon
\end{equation}
can be equivalently written as \cite{calafiore2006distributionally}:
\begin{equation*} \label{eq:det_chance}
    c^{ea} x_{k+1|k}^{ea} +c^{sa}\bar{x}_{k+1|k}^{sa,i} + c \geq \Phi^{-1}(1 - \epsilon) \cdot \sqrt{ (c^{sa})^\top Q_{k+1|k}^{sa,i} c^{sa} },
\end{equation*}
where \( \Phi^{-1}(\cdot) \) denotes the inverse cumulative distribution function of the standard Gaussian distribution.

Accordingly, the multi-agent chance constraint~\eqref{eq:mixchanceconstraints}
can be deterministically represented as
\begin{align}\label{eq:det_constraints}
\textstyle\bigcap_{i=1}^{I}\bigcap_{j=1}^{H}
\Big\{
& c^{ea} x_{k+j|k}^{ea} + c^{sa} \bar{x}_{k+j|k}^{sa,i} + c \nonumber \\
& \geq \Phi^{-1}(1 - \epsilon )
\sqrt{(c^{sa})^\top Q_{k+j|k}^{sa,i} c^{sa}}
\Big\},
\end{align}
which must hold for all predicted sequences \(\tilde{\boldsymbol{x}}^{sa,i}\) 
through their induced \(\bar{x}_{k+j|k}^{sa,i}\) and \(Q_{k+j|k}^{sa,i}\).

Finally, the resulting chance-constrained MPC problem is formulated as:
\begin{align}\label{eq:etermpc}
    \min_{a^{ea}} \sum_{j=0}^{H-1} &J^{ea}(s_{k+j+1|k}^{ea}, a_{k+j|k}^{ea}, x_{k+j+1|k}^{ea}, \mathcal{X}_{k+j+1|k}^{sa}) \\ & +J_H(s^{ea}_{k+H|k}, x^{ea}_{k+H|k})\nonumber
\end{align}

\text{s.t.}
\begin{equation}
  (\ref{eq:eaactionset})-(\ref{eq:eadynamic}), \ (\ref{eq:cumprobability})-
  (\ref{eq:seqset}),
  and \ (\ref{eq:det_constraints}). \nonumber
\end{equation}

By solving \eqref{eq:etermpc}, an optimal action sequence is obtained, of which only the first action is applied to the low-level controller. 

\section{ Recursive Feasibility and Asymptotic Stability }\label{III}
Building upon the complete formulation, the following section establishes theoretical guarantees of recursive feasibility and stability, which are fundamental to safe and reliable decision-making.

\begin{assumption}[Goal mode and baseline policy]
\label{baseline}
Let $s^{ea}_{\mathrm{goal}} \in \mathcal{S}^{ea}$ denote a goal mode
of the EA. The stage cost $J^{ea}(s^{ea},x^{ea},a^{ea})$
is nonnegative and can be normalised such that
\begin{equation}
  J^{ea}(s^{ea},a^{ea}, x^{ea},\mathcal{X}^{sa}) = 0
  \quad \Leftrightarrow \quad s^{ea} = s^{ea}_{\mathrm{goal}} .
\end{equation}
Moreover, there exists a (possibly conservative) baseline decision
policy $\bar{\pi}$ such that, for any feasible initial pair
$(s^{ea}_0,x^{ea}_0)$,
\begin{itemize}
  \item the closed-loop system under $\bar{\pi}$ satisfies all
        state-input and safety constraints;
  \item the closed-loop state process reaches the goal mode
        $s^{ea}_{\mathrm{goal}}$ in finite time and then remains
        in this mode thereafter.
\end{itemize}
\end{assumption}

\begin{remark}
Given Assumption~\ref{baseline}, the terminal cost at the prediction
horizon can be defined as the cost-to-go under the baseline policy
until the goal mode is reached, i.e.,
\[
 J_H(s^{ea}_H, x^{ea}_H)
:= \sum_{t=0}^{\tau(x^{ea}_H)-1}
J^{ea}\!\Big(s^{ea}_{H+t},\,
\bar\pi\!\big(s^{ea}_{H+t},\, x^{ea}_{H+t},\, \mathcal{X}^{sa}_{H+t}\big)\Big).
\]
where $\tau(s^{ea}_H)$ denotes the first hitting time of
$s^{ea}_{\mathrm{goal}}$. This quantity is finite for all feasible
$(s^{ea}_H,x^{ea}_H)$.
\end{remark}

\begin{remark}
Assumption \ref{baseline} is standard and practically well-justified. It only requires the existence of a conservative baseline policy $\bar\pi$ that can drive the system to a goal state $s^{ea}_{\mathrm{goal}}$ in finite time while ensuring that all constraints are satisfied. In many autonomous systems, such a baseline policy can be constructed using simple rule-based strategies that guarantee constraint satisfaction by design. These rule-based strategies \cite{shu2025decision,xu2022integrated,vitelli2022safetynet} are commonly used in practice to ensure safe fallback behaviors in uncertain or dynamic environments and can thus serve as feasible baseline policies.  Therefore, although $\bar\pi$ is not required to be performance-optimal, the assumption that such a conservative baseline policy exists is both theoretically standard and very reasonable for the class of autonomous systems considered in this work.
\end{remark}

\begin{theorem}
Suppose Assumption \ref{baseline} holds. If at the initial time \( k = 0 \), the MPC optimization~(\ref{eq:etermpc}) can find an action sequence that satisfies all safety and feasibility requirements within the prediction horizon, then for all time steps \( k \geq 1 \), it can still find an action sequence that satisfies the same  requirements. In other words, the system maintains recursive feasibility.
\end{theorem}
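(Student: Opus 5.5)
The argument is the standard shift-and-append construction, with the baseline policy $\bar\pi$ of Assumption~\ref{baseline} supplying the appended tail action. We proceed by induction on $k$. Suppose that at time $k$ problem~(\ref{eq:etermpc}) admits a feasible sequence $\mathbf{a}^\star_k=\{a^{ea}_{k|k},\dots,a^{ea}_{k+H-1|k}\}$. Its predicted modes $s^{ea}_{k+j|k}$ and states $x^{ea}_{k+j|k}$, $j=1,\dots,H$, then satisfy (\ref{eq:eaactionset})--(\ref{eq:eadynamic}) and the deterministic surrogate~(\ref{eq:det_constraints}) for every retained SA sequence in $\mathcal{A}^{sa,i}_\delta$.

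At time $k+1$ we propose the candidate
\[
\tilde{\mathbf{a}}_{k+1}=\{a^{ea}_{k+1|k},\dots,a^{ea}_{k+H-1|k},\ \bar\pi(s^{ea}_{k+H|k},x^{ea}_{k+H|k},\mathcal{X}^{sa}_{k+H})\}.
\]
Since the ego transition~(\ref{eq:satransition}) and the dynamics~(\ref{eq:eadynamic}) are deterministic, applying the first action $a^{ea}_{k|k}$ gives $s^{ea}_{k+1}=s^{ea}_{k+1|k}$ and $x^{ea}_{k+1}=x^{ea}_{k+1|k}$. The first $H-1$ steps of the candidate therefore reproduce the tail of the previous plan exactly, so admissibility~(\ref{eq:eaactionset}) holds on these steps. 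The appended last step is admissible and constraint-satisfying because $(s^{ea}_{k+H|k},x^{ea}_{k+H|k})$ is a feasible pair and, by Assumption~\ref{baseline}, the closed loop under $\bar\pi$ from any feasible pair satisfies all state--input and safety constraints.

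The main obstacle is the safety constraint~(\ref{eq:det_constraints}) on the shifted steps. At time $k+1$ the SA predictions are recomputed: new means $\bar x^{sa,i}_{k+j|k+1}$, new covariances $Q^{sa,i}_{k+j|k+1}$, and possibly a different retained set $\mathcal{A}^{sa,i}_\delta$, since (\ref{eq:cumprobability}) is re-evaluated from the newly observed mode. We would handle this by arguing that the reachability sets at $k+1$ are nested in those at $k$. The realized SA mode at $k+1$ lies on one of the branches already enumerated at time $k$, or on a pruned branch, which falls inside the accepted risk budget $\epsilon$. Conditioning on the observed state can only shrink the propagated covariance, i.e.\ $Q^{sa,i}_{k+j|k+1}\preceq Q^{sa,i}_{k+j|k}$ in the positive-semidefinite order, because (\ref{eq:coinvpro}) is monotone in $Q_{k|k}$ and the posterior covariance after measurement is no larger than the prediction. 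Consequently the right-hand side of~(\ref{eq:det_constraints}) does not grow. The inequalities satisfied at time $k$ for each retained branch then carry over to the branches retained at $k+1$.

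The missing ingredient in this nesting argument is an explicit consistency hypothesis: every retained SA sequence at $k+1$ must be a continuation of a sequence retained at $k$. If this cannot be derived from the threshold rule, we would state it as a standing interpretation of Assumption~\ref{baseline}. The assumption says $\bar\pi$ keeps all safety constraints satisfied in closed loop, so it in particular covers the newly appended horizon step $k+H+1$ against the SA predictions available at $k+1$. With safety on all steps established, the candidate $\tilde{\mathbf{a}}_{k+1}$ is feasible for~(\ref{eq:etermpc}) at $k+1$, the optimizer is nonempty, and induction from the feasible initial time $k=0$ gives feasibility for all $k\ge1$.
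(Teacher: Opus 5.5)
Your construction matches the paper's: shift the time-$k$ optimal sequence, append $\bar\pi$ at the terminal mode, and invoke Assumption~\ref{baseline} for the appended step. The paper handles the shifted steps only by asserting that ``all constraints are satisfied by the original optimal sequence.'' That tacitly treats the SA means, covariances and retained sets on the overlap $k+2,\dots,k+H$ as unchanged between $k$ and $k+1$. You are right that this is the crux. However, your nesting argument does not hold, so your proof has a gap exactly there.

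First, covariance shrinkage only controls the right-hand side of (\ref{eq:det_constraints}). For a fixed continuation branch, the recomputed mean is
\begin{align*}
\bar x^{sa,i}_{k+j|k+1}&=\bar x^{sa,i}_{k+j|k}\\
&\quad+\mathsf{A}^{sa,i}_{k+j-1}\cdots\mathsf{A}^{sa,i}_{k+1}\bigl(x^{sa,i}_{k+1}-\bar x^{sa,i}_{k+1|k}\bigr),
\end{align*}
for $j=2,\dots,H$. The innovation is driven by the unbounded Gaussian $w^i_k$. So the left-hand side $c^{ea}x^{ea}_{k+j|k}+c^{sa}\bar x^{sa,i}_{k+j|k+1}+c$ generically falls below the margin with positive probability, however much $Q$ shrinks. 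The ordering $Q^{sa,i}_{k+j|k+1}\preceq Q^{sa,i}_{k+j|k}$ itself also needs a Kalman-type posterior update, which the paper never specifies.

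Second, pruned branches are not ``inside the risk budget $\epsilon$.'' In (\ref{eq:mixchanceconstraints}) and its surrogate, $\epsilon$ only covers Gaussian deviations along the retained sequences; the mass discarded by $\delta_{\mathrm{seq}}$ is charged nowhere. More basically, a probability bound stated at time $k$ cannot make the problem at $k+1$ feasible. Once the data are realized, feasibility is a deterministic property of that problem.

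Third, your consistency hypothesis is not merely underivable from the threshold rule; it is generically false. Re-evaluating (\ref{eq:cumprobability}) at $k+1$ removes the factor $\pi^{sa,i}(s^{sa,i}_k,a^{sa,i}_k)$ of the realized action and adds one for a new final action. Hence a continuation pruned at time $k$ can clear $\delta_{\mathrm{seq}}$ at $k+1$ even if the SA behaves exactly as predicted, and the shifted plan was never checked against it.

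This cannot be folded into Assumption~\ref{baseline} either. That assumption concerns only the closed loop under $\bar\pi$, not the optimizer's shifted steps against recomputed predictions. To close the argument you need one of two things.

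(a) An explicit prediction-consistency hypothesis: the SA data used at $k+1$ on the overlap coincide with, or are dominated by, those used at $k$. This holds, for example, if predictions are not re-conditioned on realized SA states and the retained sets are nested, or if disturbances are bounded and the constraints are tightened accordingly. This is precisely what the paper's proof uses implicitly, and with it your argument reduces to the paper's.

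(b) A strong reading of Assumption~\ref{baseline}, namely that $\bar\pi$ is safe from any feasible pair against whatever predictions are current. Under that reading, take as candidate the $H$-step rollout of $\bar\pi$ from the realized $(s^{ea}_{k+1},x^{ea}_{k+1})$. Deterministic ego dynamics make that pair admissible, so the rollout is directly feasible at $k+1$ and bypasses the shifted steps altogether. The shifted candidate is then needed only for the cost decrease in the stability theorem.
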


\begin{proof}

Let the optimal action sequence and corresponding predicted state sequence at time \( k \) be:
\begin{align}
& a^*_k = \left\{ a^*_{k|k}, a^*_{k+1|k}, \dots, a^*_{k+H-1|k} \right\}, \\
& \boldsymbol{s}^*_k = \left\{ s^*_{k+1|k}, s^*_{k+2|k}, \dots, s^*_{k+H|k} \right\}.
\end{align}
At time \( k+1 \), we construct a feasible candidate 
\begin{align*}
\boldsymbol{\bar{a}}_{k+1} = \left\{ \bar{a}_{k+1|k+1}, \bar{a}_{k+2|k+1}, \dots, \bar{a}_{k+H|k+1} \right\},
\end{align*}
where
\begin{align}\label{eq:recaction}
  &  \bar{a}_{k+1+j|k+1} = a^*_{k+j+1|k}, \quad j = 0, \dots, H-2, \\
  &  \bar{a}_{k+H|k+1} = \bar{\pi}(s^*_{k+H|k}),
\end{align}
where \( \bar{\pi}(\cdot) \) is a known baseline policy from Assumption \ref{baseline} that satisfies all constraints.

Correspondingly, the candidate state trajectory at time $k+1$ is:
\begin{equation}
   \bar{s}_{k+1+j|k+1} = s^*_{k+j+1|k}, \quad j = 1, \dots, H-2,
\end{equation}
and the final state:
\begin{equation}\label{eq:recstate}
\bar{s}_{k+H|k+1} = f^{ea}(\bar{s}_{k+H-1|k+1}, \bar{a}_{k+H-1|k+1}).
\end{equation}

By Assumption~\ref{baseline}, the baseline action associated with the terminal reachable mode satisfies all constraints. Since all constraints are satisfied by the original optimal sequence and the terminal policy \(  \bar{\pi} \) ensures feasibility at the final step, the shifted sequence \((\bar{a}_{k+1}, \bar{\boldsymbol{s}}_{k+1})\) constitutes a feasible candidate at time \(k+1\). Hence, recursive feasibility is ensured.
\end{proof}

\begin{theorem}
Suppose Assumption~\ref{baseline} holds. Then the receding-horizon policy generated by the proposed MPC scheme stabilizes the EA HMDP with respect to the goal mode.
\end{theorem}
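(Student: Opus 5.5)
We use the optimal value function of \eqref{eq:etermpc} as a Lyapunov function:
\[
V_k := \sum_{j=0}^{H-1} J^{ea}\big(s^*_{k+j+1|k}, a^*_{k+j|k}, x^*_{k+j+1|k}, \mathcal{X}^{sa}_{k+j+1|k}\big) + J_H(s^*_{k+H|k}, x^*_{k+H|k}),
\]
defined whenever the problem is feasible. By the recursive feasibility theorem, $V_k$ is well defined for all $k\ge 0$. Since $J^{ea}\ge 0$, we also have $V_k\ge 0$. The plan is to show that $V$ decreases along the closed loop by at least the realized stage cost, then use nonnegativity and the zero-cost characterization in Assumption~\ref{baseline} to conclude that the EA reaches $s^{ea}_{\mathrm{goal}}$ and stays there.

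\textbf{Step 1: decrease of $V$.} Take the shifted candidate $\bar a_{k+1}$ built in the proof of recursive feasibility: the tail of $a^*_k$ followed by $\bar\pi$ at the terminal mode. Its cost equals $V_k$, minus the first stage cost
\[
\ell_k := J^{ea}(s^*_{k+1|k}, a^*_{k|k}, x^*_{k+1|k}, \mathcal{X}^{sa}_{k+1|k}),
\]
minus $J_H(s^*_{k+H|k},x^*_{k+H|k})$, plus the new stage cost incurred by $\bar\pi$ at step $k+H$, plus the new terminal cost $J_H$ at $k+H+1$. Because $J_H$ is defined as the cost-to-go of $\bar\pi$ until the first hitting time of the goal mode, it obeys the one-step identity
\[
J_H(s_H,x_H) = J^{ea}\big(s_{H+1},\bar\pi(\cdot),x_{H+1},\mathcal{X}^{sa}_{H+1}\big) + J_H(s_{H+1},x_{H+1}).
\]
If the goal mode has already been reached, both sides vanish, since $\bar\pi$ keeps the EA there and the stage cost there is zero. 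The added and subtracted terminal terms therefore cancel exactly. Optimality at $k+1$ then gives $V_{k+1}\le V_k-\ell_k$.

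\textbf{Step 2: convergence.} Summing the decrease inequality gives $\sum_{k\ge0}\ell_k\le V_0<\infty$, so $\ell_k\to0$. The stage cost takes values on a finite mode set $S^{ea}$. If $\ell$ is bounded below by some $c>0$ on every non-goal mode, then $\ell_k=0$ for all large $k$. By the normalization $J^{ea}=0\Leftrightarrow s^{ea}=s^{ea}_{\mathrm{goal}}$, the closed-loop mode $s_{k+1}=s^*_{k+1|k}$ then equals $s^{ea}_{\mathrm{goal}}$ eventually. In fact the number of non-goal steps is at most $V_0/c$, which gives finite-time attractivity. Stability in the Lyapunov sense follows from $V=0$ at the goal. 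Once $V_k=0$, the decrease inequality and nonnegativity force $V_{k'}=0$ for all $k'\ge k$, so every subsequent realized stage cost is zero and the mode remains the goal.

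\textbf{Main obstacle.} The delicate step is the exact cancellation in Step 1 under uncertainty. The candidate at $k+1$ is evaluated against the new predictions $\mathcal{X}^{sa}_{\cdot|k+1}$, which differ from those at time $k$. The stage costs $J^{ea}$ depend on $\mathcal{X}^{sa}$, and $J_H$ depends on the realized SA evolution. The first thing to try is to use the structure the paper invites: the stage cost vanishes exactly at the goal and is otherwise a function of the mode only, up to a bounded nonnegative part. We then argue the decrease statement on the mode-dependent part, using that the candidate's mode sequence is identical under the new predictions because $f^{ea}$ is deterministic. If this is not enough, we would state the result with an assumption that the stage cost is uniformly positive off the goal and independent of $\mathcal{X}^{sa}$ for the purposes of the bound, which is consistent with the normalization in Assumption~\ref{baseline}.
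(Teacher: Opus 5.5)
Your proposal is correct and follows the paper's proof. You use the optimal cost as a Lyapunov function, take the shifted sequence with $\bar\pi$ appended as the candidate, and let the baseline cost-to-go terminal cost make the terminal difference $\Delta$ nonpositive (exactly zero in your version), so that optimality gives $V_{k+1}\le V_k - J^{ea}(s_k,a_k)$. The paper handles your \emph{main obstacle} the same way as your fallback: it works with the compact mode--action cost $J^{ea}(s,a)$, so the cancellation is exact, and $c>0$ exists automatically because $S^{ea}\times A^{ea}$ is finite (your first idea, with an $\mathcal{X}^{sa}$-dependent residual, would not preserve monotonicity, so drop it); your bound of at most $V_0/c$ non-goal steps makes rigorous the paper's final step, which only invokes that $V$ is nonincreasing and $J^{ea}\ge 0$.
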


\begin{remark}
For clarity and representativeness, the stage cost $J^{ea}(s,a)$ is written as a compact form of the complete expression $J^{ea}(s^{ea}, a^{ea}, x^{ea}, \mathcal{X}^{sa})$, and the terminal cost $J_H(s)$ denotes $J_H(s^{ea}, x^{ea})$. 
This simplified form is mainly adopted in the stability analysis to facilitate theoretical derivations, while a similar state-action-dependent cost structure is also reflected in the cost design of the simulation studies.
\end{remark}

\begin{proof}
Define the optimal cost at time $k$ as:
\begin{align*}
V_{k}^* &=
\sum_{j=0}^{H-1}
J^{ea}(s^*_{k+j|k}, a^*_{k+j|k})
+ J_H(s^*_{k+H|k}) \\
&= J^{ea}(s^*_{k|k}, a^*_{k|k})
+ \sum_{j=1}^{H-1} J^{ea}(s^*_{k+j|k}, a^*_{k+j|k}) \\
&+ J_H(s^*_{k+H|k}).
\end{align*}
At time $k+1$, the candidate cost is:
\begin{align*}
V_{k+1} =&
\sum_{j=1}^{H-1} J^{ea}(s^*_{k+j|k}, a^*_{k+j|k})
+ J^{ea}(s^*_{k+H|k}, \pi(s^*_{k+H|k}))\\
&+ J_H(\bar{s}_{k+H+1|k+1}).
\end{align*}
Define the terminal difference as:
\begin{align*}
\Delta := & J^{ea}(s^*_{k+H|k}, \pi(s^*_{k+H|k}))
+ J_H(\bar{s}_{k+H+1|k+1})\\
&- J_H(s^*_{k+H|k}) \le 0.
\end{align*}

Thus:
\[
V_{k+1} - V_{k}^*
= \Delta - J^{ea}(s^*_{k|k}, a^*_{k|k}) \le 0.
\]
By optimality, $V^*_{k+1} \le V_{k+1}$, hence $V^*_{k+1} \le V^*_{k}$.
Since $J^{ea}(s,a) \ge 0$ for all non-goal states,
the system converges to the goal state $s^{ea}_{\mathrm{goal}}$ and asymptotic stability is ensured.
\end{proof}

\begin{remark}
Note that \(s^*_{k+H|k}\) and \(s^*_{k+H|k+1}\) both refer to the predicted state at time \(k+H\), though derived from different prediction steps. Therefore, the terminal cost \(\bar{J}(s^*_{k+H|k})\), designed as an upper bound of the future cost under the baseline policy, satisfies \(\Delta \leq 0\).
\end{remark}

\section{simulation}\label{IV}

To evaluate the effectiveness of the proposed decision-making algorithm,  we consider an autonomous driving scenario set in a three-lane road environment. The following sections provide the formulation of HMDPs for both the ego vehicle (EV) and SVs, design of the cost function, specification of safety constraints, and simulation studies and result analysis under various scenarios. 

At the low level, a rule-based MPC method is adopted to generate the EV’s acceleration and steering commands, following the implementation described in the Appendix of~\cite{wang2025high}. 
The vehicle dynamics are modeled using a nonlinear kinematic bicycle model, which captures realistic motion behavior during simulation.

\subsection{HMDP Formulation for SVs}

To capture the maneuver planning of the SVs, the $i$-th SV’s discrete action at time step $k$ is denoted by $a^{sv,i}_k = (a^{\text{lat},i}_k, a^{\text{long},i}_k)$, 
where the two elements represent the lateral and longitudinal action, respectively. 
The lateral component $a^{\text{lat},i}_k$ includes lane keeping, left lane change, and right lane change, while the longitudinal component $a^{\text{long},i}_k$ consists of constant-speed motion, accelerate, and decelerate. The Cartesian combination of these two maneuver dimensions yields a discrete action space containing nine representative maneuver actions, as summarized in Table~\ref{tab:actions}.

\begin{table}[ht]
\centering
\captionsetup{font=footnotesize, skip=6pt}
\caption{Maneuver actions of the SVs}
\label{tab:actions}
\setlength{\tabcolsep}{3pt}
\renewcommand{\arraystretch}{1.1}
\begin{tabular}{c c p{0.53\linewidth}}
\toprule
\textbf{Action} & 
\textbf{Symbol} $(a^{\text{lat},i}, a^{\text{long},i})$ & 
\textbf{Description} \\
\midrule
$a^{(1)}$ & (0, 0)   & Keep lane at constant speed \\
$a^{(2)}$ & (0, 1)   & Keep lane and accelerate \\
$a^{(3)}$ & (0, -1)  & Keep lane and decelerate \\
$a^{(4)}$ & (-1, 0)  & Change to left lane at constant speed \\
$a^{(5)}$ & (-1, 1)  & Change to left lane and accelerate \\
$a^{(6)}$ & (-1, -1) & Change to left lane and decelerate \\
$a^{(7)}$ & (1, 0)   & Change to right lane at constant speed \\
$a^{(8)}$ & (1, 1)   & Change to right lane and accelerate \\
$a^{(9)}$ & (1, -1)  & Change to right lane and decelerate \\
\bottomrule
\end{tabular}
\end{table}

Based on the lane topology and semantic map information, a discrete state set $s^{sv,i}_k = (s^{\text{lat},i}_k, s^{\text{long},i}_k)$ is constructed, where each state encodes the SV’s target lane and longitudinal motion maneuver. In the three-lane road environment,  lanes are indexed from left to right as Lane 1, Lane 2, and Lane 3. The complete set of discrete states is listed in Table~\ref{tab:states}. The transitions between different discrete states are illustrated in Table  \ref{tab:state_transition}.
\begin{table}[ht]
\centering
\captionsetup{font=footnotesize, skip=6pt}
\caption{Maneuver states of the SVs}
\setlength{\tabcolsep}{4pt} 
\label{tab:states}
\begin{threeparttable}
\begin{tabular}{ccc}
\toprule
\textbf{State} & \textbf{Symbol} $(s^{\text{lat},i}, s^{\text{long},i})$ & \textbf{Description} \\
\midrule
$s^{(1)}$ & (1, 0)   & Intending to be in Lane 1 with cruise speed \\
$s^{(2)}$ & (1, 1)   & Intending to be in Lane 1 with acceleration \\
$s^{(3)}$ & (1, -1)  & Intending to be in Lane 1 with deceleration \\
$s^{(4)}$ & (2, 0)   & Intending to be in Lane 2 with cruise speed \\
$s^{(5)}$ & (2, 1)   & Intending to be in Lane 2 with acceleration \\
$s^{(6)}$ & (2, -1)  & Intending to be in Lane 2 with deceleration \\
$s^{(7)}$ & (3, 0)   & Intending to be in Lane 3 with cruise speed \\
$s^{(8)}$ & (3, 1)   & Intending to be in Lane 3 with acceleration \\
$s^{(9)}$ & (3, -1)  & Intending to be in Lane 3 with deceleration \\
\bottomrule
\end{tabular}
\begin{tablenotes}
\footnotesize
\item Note: Each discrete state represents the EV’s high-level maneuver intent rather than its physical position. For example, a state such as “Intending to be in Lane 2 with acceleration” may indicate that the EV is already in Lane 2 or in the process of changing lanes towards it.
\end{tablenotes}
\end{threeparttable}
\end{table}
\begin{table}[ht]
\vspace{-3mm}
\centering
\captionsetup{font=footnotesize, skip=8pt}
\caption{State transition mapping of the SVs.}
\setlength{\tabcolsep}{5pt} 
\label{tab:state_transition}
\begin{threeparttable}
\begin{tabular}{cccccccccc}
\toprule
 & $a^{(1)}$ & $a^{(2)}$ & $a^{(3)}$ & $a^{(4)}$ & $a^{(5)}$ & $a^{(6)}$ & $a^{(7)}$ & $a^{(8)}$ & $a^{(9)}$ \\
\midrule
$s^{(1)}$ & $s^{(1)}$ & $s^{(2)}$ & $s^{(3)}$ & / & / & / & $s^{(4)}$ & $s^{(5)}$ & $s^{(6)}$ \\
$s^{(2)}$ & $s^{(2)}$ & / & $s^{(1)}$ & / & / & / & $s^{(5)}$ & / & $s^{(4)}$ \\
$s^{(3)}$ & $s^{(3)}$ & $s^{(1)}$  & / & / & / & / & $s^{(6)}$ & $s^{(4)}$ & / \\
$s^{(4)}$ & $s^{(4)}$& $s^{(5)}$ & $s^{(6)}$ & $s^{(1)}$ & $s^{(2)}$ & $s^{(3)}$ & $s^{(7)}$ & $s^{(8)}$ & $s^{(9)}$ \\
$s^{(5)}$ & $s^{(5)}$ & / & $s^{(4)}$ & $s^{(2)}$ & / & $s^{(1)}$ & $s^{(8)}$& / & $s^{(7)}$ \\
$s^{(6)}$ & $s^{(6)}$ & $s^{(4)}$ & / & $s^{(3)}$ & $s^{(1)}$ & / & $s^{(9)}$ & $s^{(7)}$ & / \\
$s^{(7)}$ & $s^{(7)}$ & $s^{(8)}$ & $s^{(9)}$ & $s^{(4)}$ & $s^{(5)}$ & $s^{(6)}$ & / & / & / \\
$s^{(8)}$ & $s^{(8)}$ & / & $s^{(7)}$ & $s^{(5)}$ & / & $s^{(4)}$ & / & / & / \\
$s^{(9)}$ & $s^{(9)}$ & $s^{(7)}$ &/ & $s^{(6)}$ & $s^{(4)}$ & / & / & / & / \\
\bottomrule
\end{tabular}
\begin{tablenotes}
\footnotesize
\item Note: Each row corresponds to the current discrete state $s^{(p)}$, and each column represents a possible action $a^{(q)}$. The table entry indicates the next state $s^{(r)}$ reached when action $a^{(q)}$ is executed at state $s^{(p)}$. For example, when the system is in state $s^{(1)}$, executing $a^{(1)}$ keeps it in $s^{(1)}$, while executing $a^{(2)}$ leads to $s^{(2)}$. ``/'' denotes infeasible transitions that violate physical or logical constraints.
\end{tablenotes}
\end{threeparttable}
\end{table}

The dynamic state vector of the $i$-th SV at time step $k$ is defined as
\[
\mathsf{\textbf{x}}_k^{sv,i} = \bigl[x_k^{sv,i},\; y_k^{sv,i},\; v_{x,k}^{sv,i} \bigr]^{\top},
\]
where $x_k^{sv,i}$ and $y_k^{sv,i}$ denote the longitudinal and lateral positions, respectively, 
and $v_{x,k}^{sv,i}$ represents the longitudinal velocity. The dynamic state evolves according to the following discrete-time prediction model:
\[
\mathsf{\textbf{x}}_{k+1|k}^{sv,i}
= \mathsf{A}_k^{sv,i}(a_k^{\text{lat},i})\, \mathsf{\textbf{x}}_{k|k}^{sv,i}
+ \mathsf{B}_k^{sv,i}(a_k^{\text{lat},i})\, \mathsf{F}_k^i + w_k^i,
\]
where $\mathsf{A}_k^{sv,i}$ and $\mathsf{B}_k^{sv,i}$ are the state and input matrices determined by the discrete lateral action $a_k^{\text{lat},i}$. 
The disturbance $w_k^i$ is modeled as zero-mean Gaussian noise, and its propagation over the prediction horizon is computed using~\eqref{eq:coinvpro}.

The vector
\[
\mathsf{F}_k^i =
\begin{bmatrix}
a_x(s_{k}^{\text{long},i}) \\[2pt]
u_y(a_k^{\text{lat},i})
\end{bmatrix}
\]
represents a hypothetical control signal, where $a_x$ denotes the nominal longitudinal acceleration determined by the current longitudinal mode $s_k^{\text{long},i}\!\in\!\{-1,0,1\}$. Each mode corresponds to a representative average acceleration value that reflects the driver's decision to decelerate, maintain speed, or accelerate, while $u_y$ is a virtual lateral control generated by a proportional-derivative (PD) law:
\[
u_{y,k}^i = K_1\!\bigl(y_{\mathrm{ref}}-y_k^{sv,i}\bigr)
          + K_2\!\bigl(v_{y,\mathrm{ref}}-v_{y,k}^{\,i}\bigr),
\]
with $ v_{y,k}^{\,i}$ is the lateral velocity. 

The action-dependent matrices are defined as
\[
\mathsf{A}^{sv}(\text{LK})=
\begin{bmatrix}
1 & 0 & T_h\\
0 & 1 & 0\\
0 & 0 & 1
\end{bmatrix},\quad
\mathsf{B}^{sv}(\text{LK})=
\begin{bmatrix}
\tfrac12 T_h^2 & 0\\
0              & 0\\
T_h            & 0
\end{bmatrix},
\]
and for lane-change maneuvers (both left and right):
\begin{align*}
&\mathsf{A}^{sv}(\text{LC})=
\begin{bmatrix}
1 & 0 & T_h\\
0 & 1-\tfrac12 K_1 T_h^2 & 0\\
0 & 0 & 1
\end{bmatrix},\quad \\
&\mathsf{B}^{sv}(\text{LC})=
\begin{bmatrix}
\tfrac12 T_h^2 & 0\\
0              & \tfrac12 T_h^2\\
T_h            & 0
\end{bmatrix}.  
\end{align*}

Accordingly, the pair $\bigl(A_k^{sv,i}, B_k^{sv,i}\bigr)$ is selected as
\[
\bigl(\mathsf{A}_k^{sv,i}, \mathsf{B}_k^{sv,i}\bigr)=
\begin{cases}
\bigl(\mathsf{A}_{\text{LK}}, \mathsf{B}_{\text{LK}}\bigr), & a^{\text{lat},i}_k=0,\\[6pt]
\bigl(\mathsf{A}_{\text{LC}}, \mathsf{B}_{\text{LC}}\bigr), & a^{\text{lat},i}_k \in \{-1, 1\}.
\end{cases}
\]

Based on the above model, the potential maneuvers and corresponding dynamic evolution of the SVs are predicted and subsequently utilized for safety assessment.

\begin{remark}
In this formulation, the lateral action $a_k^{\text{lat},i}$ and the longitudinal state $s_k^{\text{long},i}$ jointly serve as the basis for predicting the state evolution, rather than relying solely on either actions or states. This design choice reflects the underlying semantics of driving behaviors: lateral dynamic states evolution depend on whether a lane-change action occurs, rather than on the specific lane identity; whereas longitudinal dynamic states evolution depends on the ongoing maneuver state, being in an accelerating or decelerating phase, rather than on the instant of initiating such actions. A similar principle is reflected in the subsequent cost-function design.
\end{remark}

\subsection{HMDP Formulation for EV}

The MDP formulation of the EV follows a similar modeling principle to that of the SVs, encompassing the definition of discrete maneuver states, admissible actions, and transition rules. 
Specifically, the discrete state and action of the EV are denoted as
\[
s_k^{ea} = (s_k^{\mathrm{lat}},\, s_k^{\mathrm{long}}), \quad 
a_k^{ea} = (a_k^{\mathrm{lat}},\, a_k^{\mathrm{long}}).
\]
Note that the subscript \(i\) is omitted for the EV since there is only one EV. The subscript \(i\) is reserved for SVs to distinguish multiple surrounding agents. Thus, the presence or absence of \(i\) indicates whether the variable refers to SVs or the EV.

The EV’s discrete states and actions share the same structural definitions as those of the SVs, following Tables~\ref{tab:actions} and~\ref{tab:states}. Their transitions are governed by the same rule set summarized in Table~\ref{tab:state_transition}.

The dynamic state of the EV is subsequently expressed as
\[
\mathsf{\textbf{x}}_k^{ev} = \bigl[x_k^{ev},\; y_k^{ev},\; v_{x,k}^{ev} \bigr]^{\top},
\]
where $x_k^{ev,i}$ and $y_k^{ev,i}$ denote the longitudinal and lateral positions of EV, respectively, 
and $v_{x,k}^{ev,i}$ is its longitudinal velocity. 

Given a high-level sampling period \(T_h\), the state evolution for prediction is described by
\begin{equation}\label{eq:evdynamic}
\mathsf{\textbf{x}}_{k+1|k}^{ev}
=
\mathsf{A}^{ev}\mathsf{\textbf{x}}_{k|k}^{ev}
+
\mathsf{B}^{ev}\!\bigl(a_k^{\mathrm{lat}},\,s_{k+1}^{\mathrm{long}}\bigr),   
\end{equation}
where the system and input matrices are defined as
\[
\mathsf{A}^{ev} =
\begin{bmatrix}
1 & 0 & T_h \\
0 & 1 & 0  \\
0 & 0 & 1
\end{bmatrix},
\quad
\mathsf{B}^{ev}\!\bigl(a_k^{\mathrm{lat}},s_{k+1}^{\mathrm{long}}\bigr) =
\begin{bmatrix}
\dfrac{1}{2}\,s_{k+1}^{\mathrm{long}}\,a_{\mathrm{ave}}\,T_h^{2} \\[3pt]
\Delta y_k \\[3pt]
s_{k+1}^{\mathrm{long}}\,a_{\mathrm{ave}}\,T_h
\end{bmatrix},
\]
with \(a_{\mathrm{ave}}\) denoting the nominal longitudinal acceleration magnitude.
The lateral displacement \(\Delta y_k\) is generated following a fifth-order polynomial lane-change profile:
\[
\Delta y_k =
    a_k^{\mathrm{lat}}\,
    r_d\bigl[c_1\tau_3(k)+c_2\tau_4(k)+c_3\tau_5(k)\bigr],
 \]
where \(r_d\) is the lane width, \(c_1,c_2,c_3\) are fixed coefficients, and \(\tau_i(k)\) denotes the normalized polynomial basis function of order \(i\) corresponding to the longitudinal maneuver.

In summary, the above equations couple the discrete variables
\(\bigl(a_k^{\mathrm{lat}},\,s_{k+1}^{\mathrm{long}}\bigr)\) with the dynamics,
thereby enabling consistent motion prediction across all high-level maneuver modes.

\subsection{Cost Function}

The cost function serves as an interface between human driving preferences and the EV's decision-making process. It encodes subjective priorities across different maneuver combinations and transfers them to the autonomous vehicles. As discussed earlier, the EV's state evolution is primarily governed by the lateral action \(a_k^{\mathrm{lat}}\) and the longitudinal state \(s_k^{\mathrm{long}}\).  To systematically represent all possible combinations of these variables and ensure a unified, unbiased evaluation during optimization, a one-hot representation is adopted for both the lateral and longitudinal variables.

For the longitudinal dimension, the one-hot vector 
\(\bigl(s_{k+1}^{\mathrm{acc}},\,s_{k+1}^{\mathrm{cru}},\,s_{k+1}^{\mathrm{dec}}\bigr)\) 
corresponding to acceleration, cruising, and deceleration, respectively, is defined,  
We then impose the following mode-consistency constraint:
\[
s_{k+1}^{\mathrm{long}} = s_{k+1}^{\mathrm{acc}} - s_{k+1}^{\mathrm{dec}},
\]
which ensures that the longitudinal mode used in the prediction dynamics \eqref{eq:evdynamic} matches the discrete mode selected by the optimizer, 
where selecting \(\bigl(s_{k+1}^{\mathrm{acc}},\,s_{k+1}^{\mathrm{cru}},\,s_{k+1}^{\mathrm{dec}}\bigr)\) activates the corresponding dynamic submodel.

Similarly, for the lateral dimension, a one-hot vector 
\(\bigl(a_k^{\mathrm{left}},\, a_k^{\mathrm{keep}},\, a_k^{\mathrm{right}}\bigr)\),
corresponding to left lane change, lane keeping, and right lane change, respectively, is defined. 
The following constraint is then imposed:
\[
a_k^{\mathrm{lat}} = a_k^{\mathrm{right}} - a_k^{\mathrm{left}},
\]
which ensures that the lateral action used in the prediction dynamics \eqref{eq:evdynamic} matches the action variable selected in the cost function, where choosing \(\bigl(a_k^{\mathrm{left}}, a_k^{\mathrm{keep}}, a_k^{\mathrm{right}}\bigr)\) activates the corresponding dynamic submodel.

\begin{remark}
In the MPC optimization, directly using the discrete variables 
\(a_k^{\mathrm{lat}}, s_{k+1}^{\mathrm{long}} \in \{-1,0,1\}\) may introduce numerical bias, 
as the cost minimization tend to  favor negative-valued modes (e.g., \(a_k^{\mathrm{lat}}=-1\)), which appear to lower the objective value.
The one-hot formulation removes this imbalance by assigning binary indicators (1 for the selected mode and 0 otherwise), enabling all maneuver options to be evaluated fairly within a unified cost structure.
\end{remark}

The cost function is then defined over the prediction horizon as
\begin{equation}\label{eq:cost_coupled}
   J^{ev} = \sum_{k=1}^{H} \sum_{m=1}^{3} \sum_{n=1}^{3} c_{m,n} \, a_k^{(n)} \, s_{k+1}^{(m)},
\end{equation}
where:
\begin{itemize}
    \item \( a_k^{(n)} \in \{ a_k^{\mathrm{left}},\ a_k^{\mathrm{keep}},\ a_k^{\mathrm{right}} \} \) denotes the lateral action at time step \(k\);
    \item \( s_{k+1}^{(m)} \in \{ s_{k+1}^{\mathrm{acc}},\ s_{k+1}^{\mathrm{cru}},\ s_{k+1}^{\mathrm{dec}} \} \) denotes the longitudinal maneuver state at the next time step;
    \item \( c_{m,n} \) is the cost coefficient associated with the \((m,n)\)-th maneuver combination.
\end{itemize}

These cost coefficients \( c_{m,n} \) encode the relative preference among different maneuver pairs in the decision-making process. Stable and smooth driving behaviors, such as lane keeping combined with cruising, are assigned smaller coefficients, making them more likely to be selected. Conversely, aggressive combinations, such as lane changing while decelerating, are assigned larger coefficients, discouraging their selection unless necessary.

\subsection{Safety Constraints}

Safety constraints are formulated to guarantee that the EV maintains a sufficient longitudinal separation from SVs. These constraints are conditionally activated based on the maneuver states. Specifically, a safety constraint is enforced whenever the EV and a SV share the same intended target lane, i.e.,
\[
s^{\mathrm{lat},i}_{k+1|k} = s^{\mathrm{lat}}_{k+1|k}.
\]

We begin with the standard chance-constraint formulation defined in~\eqref{eq:chance_constraints}, and specify the constraint function as
\[
h\bigl(x_{k+1|k}^{ev}, x_{k+1|k}^{sv,i}\bigr)
= x_{k+1|k}^{ev} - x_{k+1|k}^{sv,i} - d_{\text{safe}}.
\]
Accordingly, the chance constraint is expressed as
\[
\mathbb{P}\!\left(
    h\bigl(x_{k+1|k}^{ev}, x_{k+1|k}^{sv,i}\bigr) \ge 0
\right) \ge 1 - \epsilon,
\]
which enforces that the predicted longitudinal separation satisfies the safety margin \( d_{\text{safe}} \) with confidence level \( 1-\epsilon \).

As \(x_{k+1|k}^{sv,i}\) follows a Gaussian distribution while \(x_{k+1|k}^{ev}\) is deterministic, the relative distance
\(\Delta x_{k+1|k} = x_{k+1|k}^{ev} - x_{k+1|k}^{sv,i}\)
also follows a Gaussian distribution with mean and variance
\[
\mu_{\Delta,k+1|k} = x_{k+1|k}^{ev} - \bar{x}_{k+1|k}^{sv,i},
\qquad
Q_{\Delta,k+1|k} = Q_{k+1|k}^{sv,i}.
\]
where \( \bar{x}_{k+1|k}^{sv,i} \) represents the predicted mean position of the \(i\)-th SV. Consequently, the chance constraint can be reformulated into the following deterministic inequality:
\[
\mu_{\Delta,k+1|k}
\ge d_{\text{safe}} + z_\epsilon \sqrt{ Q_{\Delta,k+1|k} }.
\]
When the EV is predicted to be following the SV, the constraint takes the symmetric form:
\[
-\mu_{\Delta,k+1|k}
\ge d_{\text{safe}} + z_\epsilon \sqrt{ Q_{\Delta,k+1|k} }.
\]
This completes the deterministic reformulation of the safety constraint, enabling its direct incorporation into the MPC optimization framework.

\subsection{Case 1: Validation under Lateral Maneuver Uncertainty}
In this scenario, the SVs are assumed to maintain a constant longitudinal velocity to isolate the effect of lateral uncertainty, and the resulting EV decisions are analyzed accordingly.

\subsubsection{Scenario Description}

 A three-lane urban expressway scenario is examined, where the EV (depicted in red) initially cruises in the middle lane at a constant speed. Two SVs are initialized in the left and right lanes, respectively. Each SV has two potential lateral maneuvers: (i) to maintain cruising in its current lane, or (ii) to perform a lane change into the middle lane. Their predicted trajectories under different maneuvers are illustrated in different colors. The ellipses along the predicted trajectories visualize the propagated uncertainty, and are constructed from the covariance matrices associated with each predicted state.

\subsubsection{Key Parameters}
The key simulation parameters and initial conditions of the vehicles are summarized in Tables~\ref{tab:case1_parameters} and~\ref{tab:initial_conditions}, respectively.

\begin{table}[ht]
\centering
\captionsetup{font=footnotesize, skip=6pt}
\caption{Simulation and vehicle parameters (Case~1)}
\setlength{\tabcolsep}{4pt}
\label{tab:case1_parameters}
\begin{threeparttable}
\begin{tabular}{lll}
\toprule
\textbf{Parameter} & \textbf{Description} & \textbf{Value} \\ 
\midrule
$T_{\mathrm{sim}}$  & Simulation duration                        & 50~s \\
$T_l$               & Low-level control sampling period          & 0.2~s \\
$T_h$               & High-level decision update period          & 0.8~s \\
$H$                 & High-level prediction horizon              & 3~steps (2.4~s) \\
$w_{\mathrm{lane}}$ & Lane width                                 & 4~m \\
$d_{\mathrm{safe}}$ & Minimum longitudinal safe distance         & 40~m \\
$K_1,\,K_2$         & PD gains for lateral control               & 3,\;1 \\
$a_{\mathrm{avg}}$  & Nominal acceleration magnitude\tnote{$\ast$} & 2~m/s\textsuperscript{2} \\
$\delta$            & Cumulative probability threshold           & $10^{-5}$ \\ 
$c_{\mathrm{mn}}$   & Cost coefficients in~\eqref{eq:cost_coupled}
& $\begin{bmatrix}
6 & 5 & 6 \\
2 & 0 & 2 \\
11 & 9 & 11
\end{bmatrix}$ \\ 
\bottomrule
\end{tabular}
\begin{tablenotes}
\footnotesize
\item[$\ast$] Positive value denotes acceleration; a negative value (not shown) would indicate the same magnitude of deceleration.
\end{tablenotes}
\end{threeparttable}
\end{table}
\begin{table}[ht]
\vspace{-3mm} 
\centering
\captionsetup{font=footnotesize, skip=6pt}
\caption{Initial conditions (Case~1)}
\setlength{\tabcolsep}{4pt}
\label{tab:initial_conditions}
\begin{threeparttable}
\begin{tabular}{lllll}
\toprule
\textbf{Vehicle} & \textbf{Initial Lane} & $x_0$ (m) & $y_0$ (m) & $v_0$ (m/s) \\ \midrule
EV (red)       & Lane 2 & 35  & 0   & 20 \\
SV1 (blue)   & Lane 1   & 100 &  4  & 15 \\
SV2 (green)  & Lane 3  & 60  & -4  & 19 \\ \bottomrule
\end{tabular}
\end{threeparttable}
\end{table}

\subsubsection{Observations and Results}

The simulation results demonstrate the EV’s adaptive maneuvering behavior in response to the SVs’ lateral-maneuver uncertainty. 

Figure~\ref{fig:case1_snapshots} presents six snapshots of the traffic scene. Each snapshot corresponds to the state resulting from the decisions taken at the sampling instants marked by yellow circles in Fig.~\ref{fig:case1_actiontime}. Initially, the EV (red) cruises in Lane~2 at a constant speed. Because both SVs are predicted to merge into Lane~2, the EV refrains from a straightforward overtaking maneuver. 
As the simulation progresses, the right-lane vehicle (green) gradually passes the left-lane vehicle (blue). Aware of this evolution, the EV performs a lane change to Lane~3, positions itself behind the faster SV, and temporarily follows it to benefit from its higher speed.  
After the green SV overtakes the blue SV, the EV returns to Lane 1 and resumes cruising.  

\begin{figure}[htbp]
    \captionsetup{font=footnotesize}
    \centering
    \includegraphics[width=0.49\textwidth]{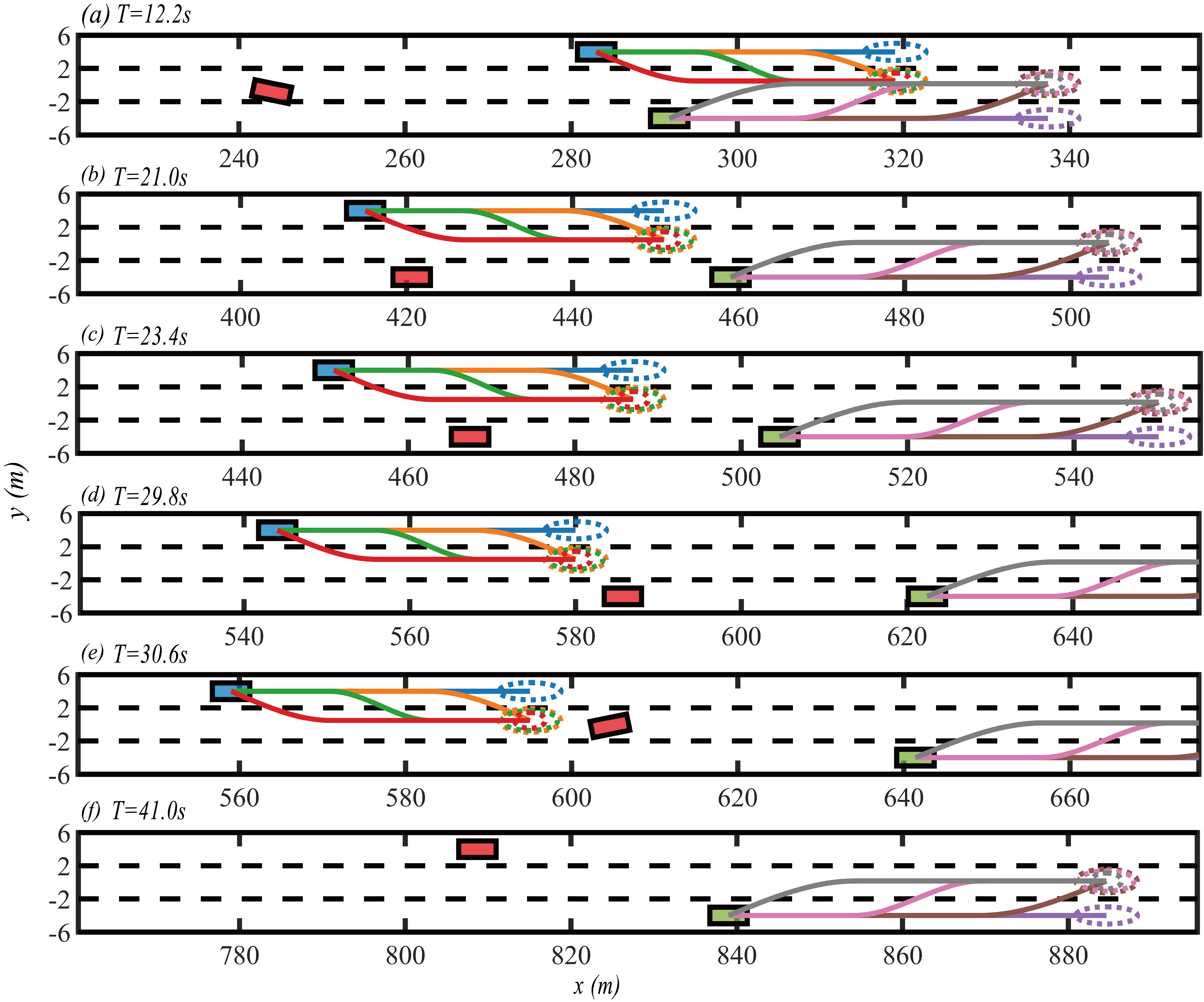}
    \caption{Snapshots of the EV (red) at six representative time instants. 
Each snapshot shows the vehicle state at $t_{k+1}$ resulting from the decision 
computed at the sampling instant $t_k$ marked in Fig.~\ref{fig:case1_actiontime}. 
Subfigures (a)–(f) correspond to 12.2, 21.0, 23.4, 29.8, 30.6, and 41.0~s, respectively.
    }
    \label{fig:case1_snapshots}
\end{figure}

Fig.~\ref{fig:case1_actiontime} depicts the evolution of the EV’s discrete MDP states and selected actions in Case~1. The upper subplot shows the lateral decisions, while the lower subplot displays the longitudinal decisions. The yellow circles mark the sampling instants at which decisions are computed. The resulting continuous states at the subsequent sampling instants are visualized in Fig.~\ref{fig:case1_snapshots}. The complete trajectories of the EV (red) and both SVs (blue and green) are given in Fig.~\ref{fig:case1_trajectory}. 

\begin{figure}[htbp]  \captionsetup{font=footnotesize}   \centering   \includegraphics[width=0.49\textwidth]{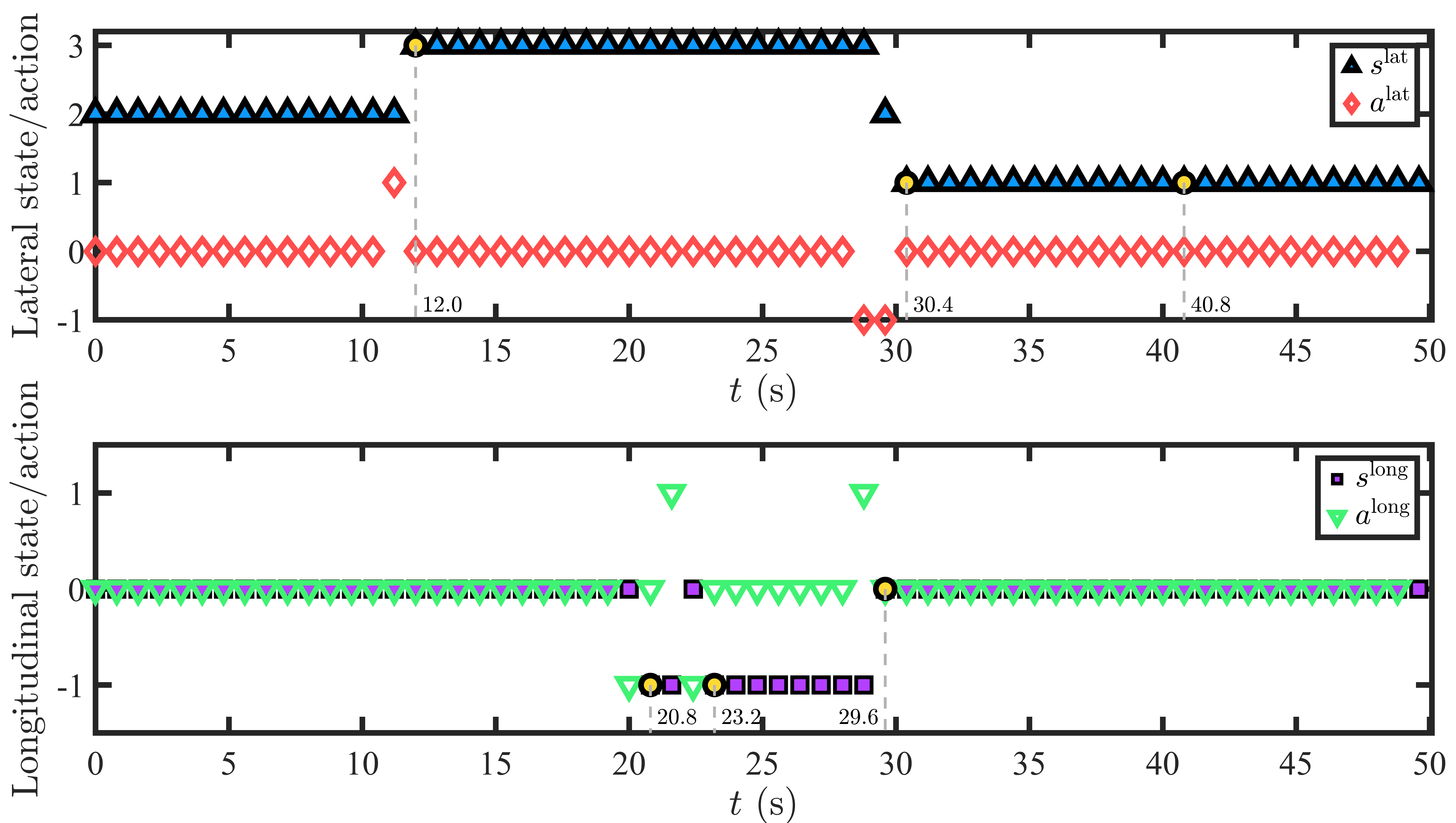}
    \caption{Time histories of the EV’s lateral (top) and longitudinal (bottom) MDP states and actions in Case~1. Yellow circles indicate the sampling instants at which  the six decisions are taken, while Fig.~\ref{fig:case1_snapshots} shows the resulting 
states at the next sampling instants.}

    \label{fig:case1_actiontime}
\end{figure}

\begin{figure}[htbp]
\captionsetup{font=footnotesize}
    \centering    \includegraphics[width=0.49\textwidth]{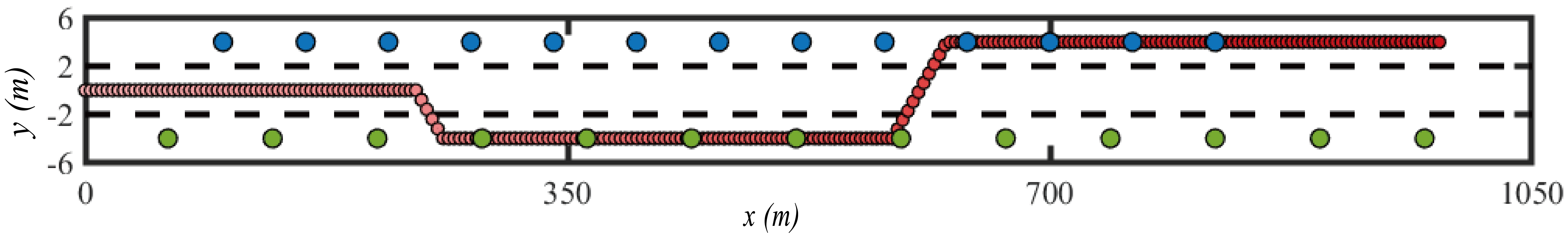}
    \caption{Complete EV trajectory in Case~1. }
    \label{fig:case1_trajectory}
\end{figure}

In summary, Case 1 focuses on the framework’s response to lateral-maneuver uncertainty from SVs.
The results show that the EV can anticipate and adapt to uncertain lane-change behaviors of neighboring vehicles while maintaining safe distances and smooth motion, confirming the framework’s ability to handle lateral interaction uncertainty effectively.

\subsection{Case 2: Validation under Longitudinal Maneuver Uncertainty}

In Case 1, the focus was on the uncertainty in the SVs’ lateral maneuvers. In this case, the analysis shifts to longitudinal uncertainty, specifically variations in velocity, to examine how the EV adapts its decisions to changing traffic dynamics.

\subsubsection{Scenario Description}

This scenario is set on a three-lane expressway, where Lane~3 is typically reserved for heavy trucks. As a result, the EV tends to avoid this lane unless under unavoidable circumstances. Two SVs are involved: one drives ahead of the EV in Lane~2, while the other occupies the adjacent Lane~1. Each vehicle is assumed to exhibit one of three possible longitudinal maneuvers: cruising, accelerating, or decelerating. The EV must evaluate the potential maneuvers of both SVs and determine whether to follow, overtake, or maintain its current maneuver.

\subsubsection{Key Parameters}
Most parameters remain consistent with Case~1; however, the minimum safety distance is increased to 75~m, while the low-level sampling period and the high-level decision update period are adjusted to 0.15~s and 0.6~s, respectively, to account for the higher vehicle speeds in this scenario. The initial conditions of the vehicles are summarized in Table~\ref{tab:initial_conditions_case2}.

\begin{table}[ht]
\centering
\captionsetup{font=footnotesize, skip=6pt}
\caption{Initial conditions (Case~2)}
\setlength{\tabcolsep}{4pt}
\label{tab:initial_conditions_case2}
\begin{threeparttable}
\begin{tabular}{lllll}
\toprule
\textbf{Vehicle} & \textbf{Initial Lane} & $x_0$ (m) & $y_0$ (m) & $v_0$ (m/s) \\ \midrule
EV (red)& Lane 2 &  35  &   0  & 30 \\
SV1 (blue)   & Lane 1   &   0  &   4  & 25 \\
SV2 (green)  & Lane 2  & 150  &  0  & 25 \\ \bottomrule
\end{tabular}
\end{threeparttable}
\end{table}

\subsubsection{Observations and Results}

Fig.~\ref{fig:longitudinal_strip_seq} and Fig.~\ref{fig:velocity_profile} jointly illustrate the EV’s behavior under longitudinal uncertainty arising from the SVs’ velocity variations. The EV adapts its strategy dynamically in response to these changes.  

In Fig.~\ref{fig:longitudinal_strip_seq}, the colored strips represent the predicted longitudinal trajectories of SVs under different maneuver modes: green for deceleration, purple for cruising, and red for acceleration. The transparency of each strip encodes the likelihood of the corresponding mode, with deeper color indicating higher probability. 

We assume that the SV1 (blue) initially cruises with a high probability of 0.9 in Lane 1, while the acceleration and deceleration modes each have a probability of 0.05 before $t = 7.8~ \text{s}$. After $t = 7.8 ~\text{s}$, the dominant mode switches to acceleration with 0.9 probability. For the SV2 (green) in Lane 2, the cruise mode remains dominant throughout the entire episode with a fixed probability of 0.8, while the acceleration and deceleration modes each retain a probability of 0.1.

As shown in Fig.~\ref{fig:longitudinal_strip_seq},
initially, the EV accelerates to initiate an overtaking maneuver while the rear vehicle in the target lane (SV1) maintains a cruising speed.
During the lane-change process, SV1 continues to accelerate and close in from behind, prompting the EV to perform a secondary acceleration to maintain a safe separation before completing the maneuver.
After overtaking the slower leading vehicle (SV2), the EV gradually reduces its speed to return toward the desired cruising velocity.
When the faster SV1 approaches from behind, the EV proactively returns to the middle lane, effectively yielding the overtaking lane while maintaining sufficient safety margins.

The corresponding velocity profile is shown in Fig.~\ref{fig:velocity_profile} (solid lines), which captures these adaptive behaviors, with five key decision points, marked by yellow circles, indicating distinct speed transitions associated with maneuver changes. The complete spatial trajectory is depicted in Fig. ~\ref{fig:case2_trajectory}, where the path marked with red circles confirms that the EV successfully executes the overtaking maneuver and returns to the original lane.

In summary, Case 2 demonstrates that the proposed framework is also capable of producing optimal decisions under longitudinal uncertainty, ensuring both safety and efficiency.

The EV’s decision-making mechanism has been validated under both isolated lateral and longitudinal uncertainties in Case 1 and Case 2, respectively. Since the longitudinal and lateral components are formulated within the same decision-making architecture, considering them jointly only increases the dimensionality of the optimization. Consequently, the proposed framework can be readily extended to scenarios in which lateral and longitudinal uncertainties coexist.

\begin{figure}[htbp]
   \captionsetup{font=footnotesize}
    \centering
    \begin{subfigure}[b]{0.5\textwidth}
      \includegraphics[width=\textwidth]{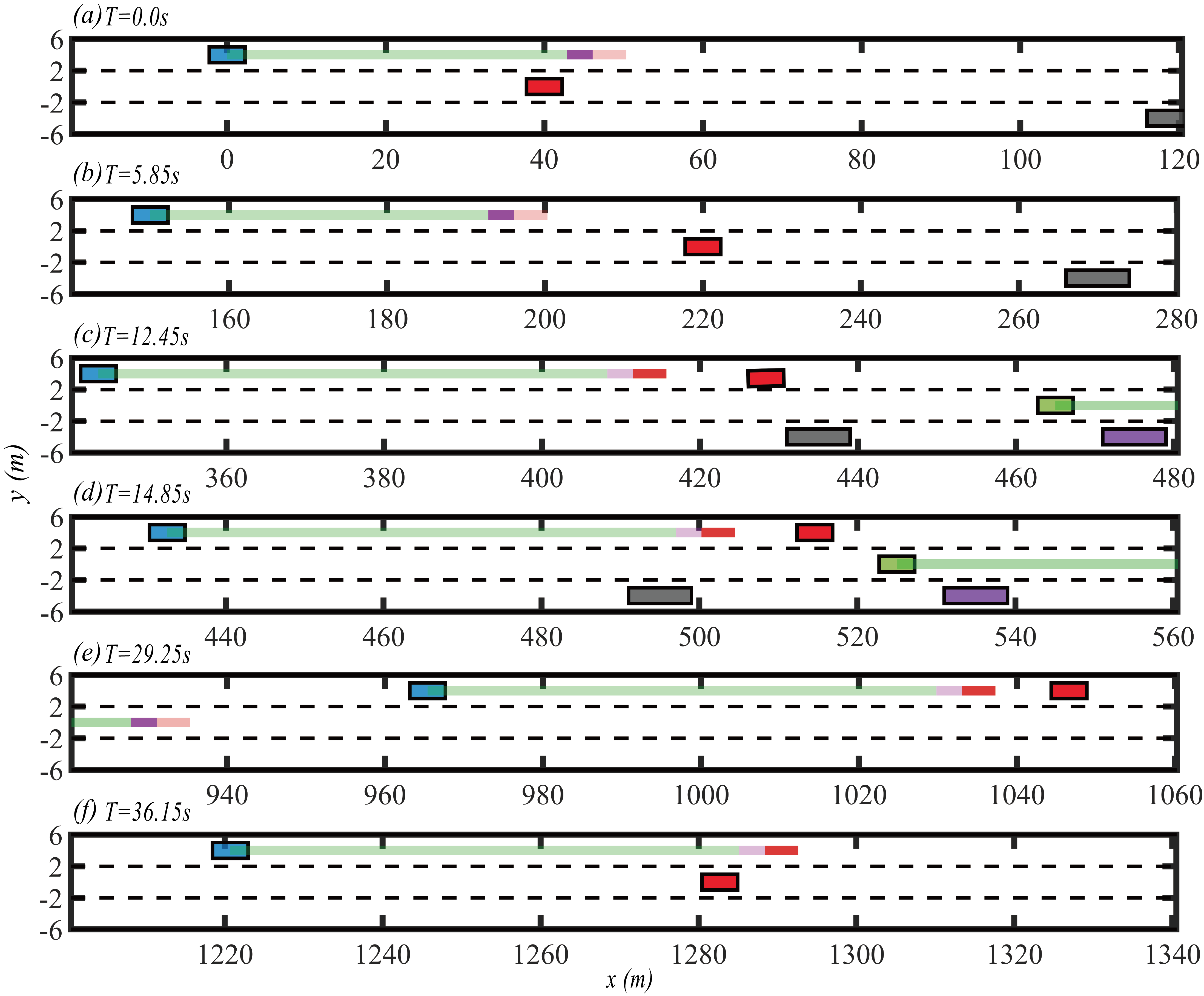}
    \end{subfigure}
    \caption{Snapshots of EV maneuver at six key decision moments under longitudinal uncertainty. Subfigures (a)-(f) correspond to the time points 0.00~s, 5.85~s, 12.45~s, 14.85~s, 29.25~s, and 36.15~s, respectively.}
    \label{fig:longitudinal_strip_seq}
\end{figure}
\begin{figure}[htbp]
   \captionsetup{font=footnotesize}
    \centering
    \begin{subfigure}[b]{0.49\textwidth}
      \includegraphics[width=\textwidth]{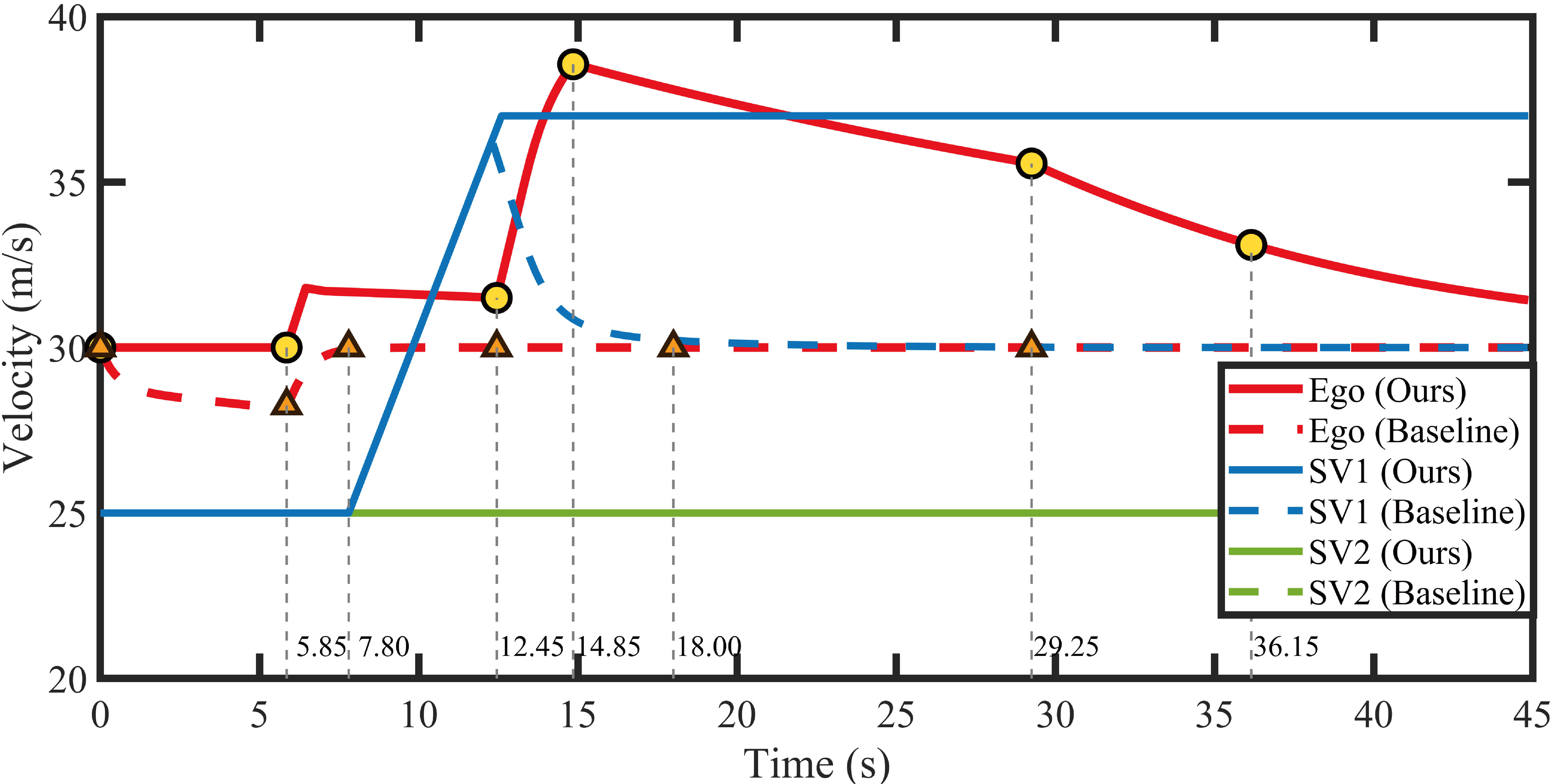}
    \end{subfigure}
    \caption{Comparison of velocity profiles between the proposed HMDP-MPC framework (solid lines) and the rule-based baseline (dashed lines). The markers indicate key decision moments corresponding to the snapshots in Fig. ~\ref{fig:longitudinal_strip_seq} and Fig.~\ref{fig:rulebased_snapshots}.}
    \label{fig:velocity_profile}
\end{figure}

\begin{figure}[htbp]
   \captionsetup{font=footnotesize}
    \centering
    \includegraphics[width=0.49\textwidth]{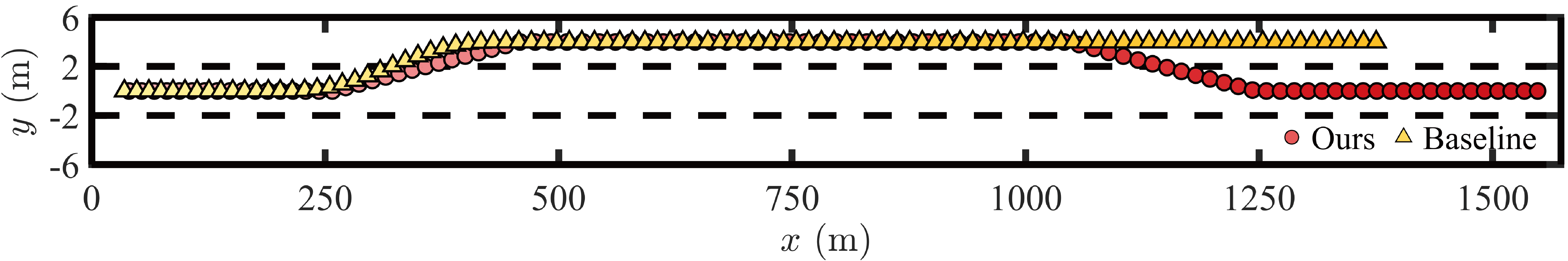}
    \caption{Comparison of EV trajectories in Case 2. The red circles denote the EV trajectory under the proposed HMDP-MPC framework, and the yellow triangles denote the EV trajectory under the rule-based baseline.}
    \label{fig:case2_trajectory}
\end{figure}

\subsubsection{Comparison with Rule-Based Method}

To highlight the performance distinctions between the proposed framework and conventional reactive strategies, we compare it against a rule-based method based on the IDM \cite{treiber2000congested} and MOBIL models \cite{kesting2007general} \footnote{The implementation details and simulation code for the rule-based baseline 
are available at: \url{https://github.com/SIYUANLI2023/IDM-MOBIL/tree/main}.}. Fig.~\ref{fig:velocity_profile}-Fig. \ref{fig:rulebased_snapshots} present the comparative results under this baseline.

Fig.~\ref{fig:rulebased_snapshots} illustrates six snapshots of the traffic scene at key decision moments under the rule-based baseline. 
Initially, the EV maintains cruising in Lane~2 while SV1 in the adjacent lane is cruising. 
At $t\!=\!5.85$~s,  the EV initiates a lane-change maneuver toward Lane~1. 
When $t\!=\!7.8$~s, SV1 has begun to accelerate (see Fig.~\ref{fig:velocity_profile} dashed lines) and is continuously closing in from behind while the EV is lane changing. Despite the decreasing gap, the EV continues the lane-change maneuver without adjustment, leading SV1 to perform an abrupt deceleration to avoid a potential collision. 
The EV then completes the lane change and overtakes the slower leading vehicle (SV2). 
However, after overtaking, it continues cruising in the overtaking lane without autonomously returning to its original lane (see Fig.~\ref{fig:case2_trajectory} yellow triangles).

Compared with the proposed HMDP-MPC framework, several distinct differences can be identified:
\paragraph{Efficiency}
During the 45 s simulation, the EV and SV1 under the rule-based baseline cover approximately 1250~m and 1150~m, respectively, whereas in the proposed framework they reach about 1400~m and 1300~m, respectively. 
This indicates that the proposed approach achieves higher motion efficiency by anticipating future interactions and maintaining smoother velocity evolution without oscillatory corrections.

\paragraph{Underlying mechanism difference}

Rule-based strategies rely on explicitly defined instructions that specify what an agent should do under given conditions (e.g., “if the vehicle ahead is close, then decelerate”). 
In the considered rule-based baseline, no rule is defined for situations where a vehicle rapidly approaches from behind during lane change process. 
As a result, the EV fails to adjust its behavior in time, forcing the following vehicle to perform abrupt deceleration and leaving the EV occupying the overtaking lane without yielding. 

In contrast, the proposed framework specifies what the agent must not do by formulating safety constraints (e.g., “the inter-vehicle distance must not fall below a threshold”). 
These constraints are inherently embedded in the optimization process, allowing the EV to automatically select feasible actions that satisfy safety requirements without additional decision rules. 
Even when the following vehicle accelerates unexpectedly, the EV naturally adapts its maneuver within the safe constraint set, thereby demonstrating a higher level of autonomy.

Furthermore, the proposed framework enables proactive responses to potential interactions before they occur. This predictive capability prevents sudden braking or unnecessary lane occupation, resulting in both safer and more efficient maneuvers compared with the reactive rule-based controller.

\begin{figure}[t]
   \captionsetup{font=footnotesize}
    \centering    \includegraphics[width=0.49\textwidth]{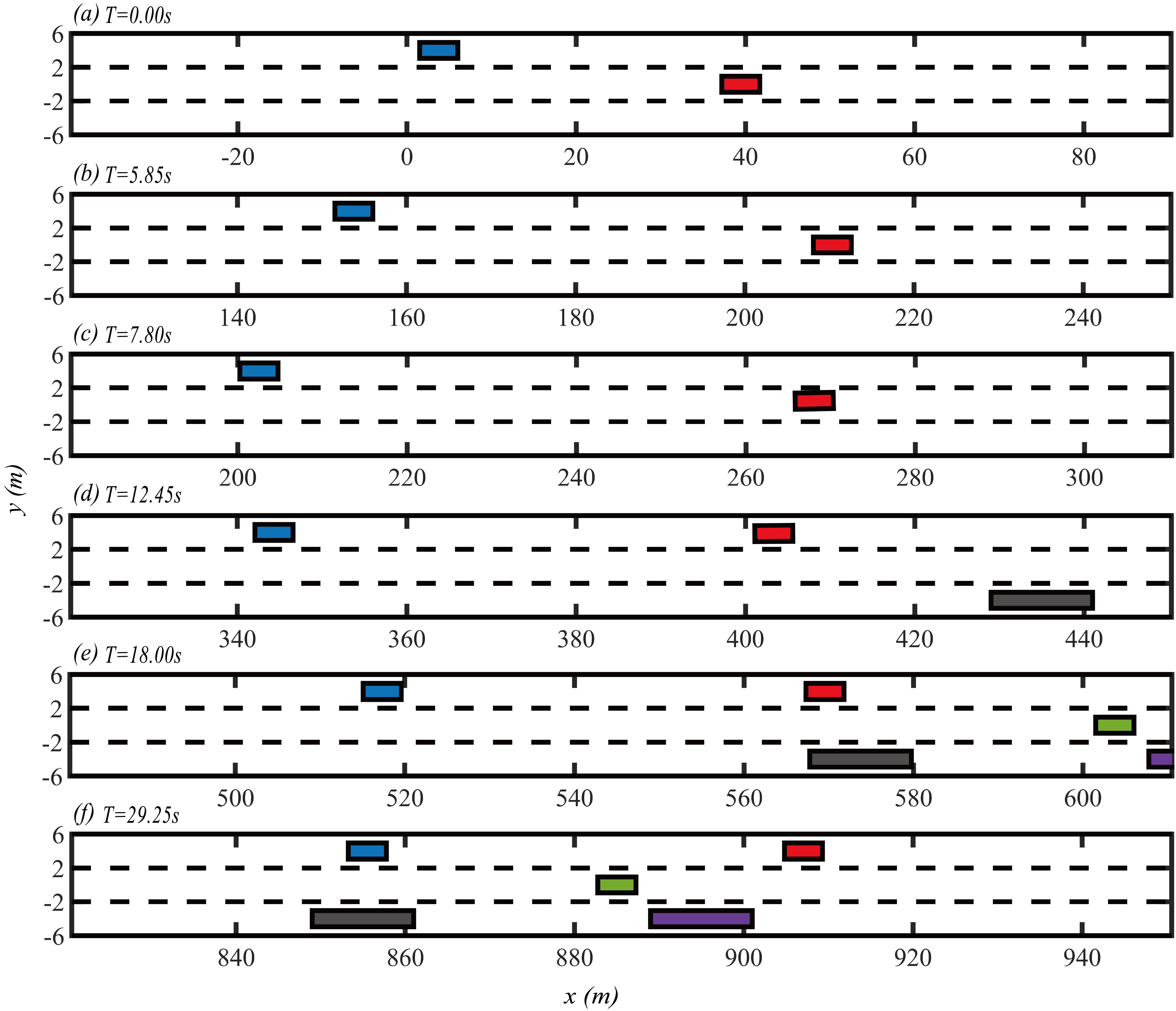}
    \caption{Snapshots of EV maneuver under the rule-based method at six decision moments. Subfigures (a)-(f) correspond to the time points 0.00~s, 5.85~s, 7.80~s, 12.45~s, 18.00~s, and 29.25~s, respectively.}
    \label{fig:rulebased_snapshots}
\end{figure}

\subsection{Case 3: Risk Threshold Analysis in Urban Environments}

To systematically examine the effect of the risk threshold $\epsilon$ on the EV's decision-making process, we construct an urban scenario and compare the resulting maneuvers under different risk levels.

\subsubsection{Scenario Description}

The scenario emulates an urban road segment characterized by lower velocities. The EV is initially positioned in Lane 1, accompanied by two SVs. Owing to the complexity of urban environments, shorter control and decision intervals are adopted to enable timely responses.

In this scenario, the EV is simulated under multiple risk thresholds $\epsilon$. The resulting trajectories are visualized within a single figure, where each trajectory is distinguished by a unique outline color corresponding to a specific $\epsilon$ value. This setup facilitates a direct comparison of how varying risk sensitivity affects the timing and position of the  lane change maneuver under identical environmental conditions.

Additionally, a relatively large cumulative probability threshold ($\delta = 0.2$) is adopted, such that for each SV only the most probable maneuver (longitudinal cruising in the current lane) is retained. 

\subsubsection{Key Parameters}

The simulation and vehicle parameters are summarized in Table~\ref{tab:case3_parameters}, and the initial conditions of all agents are listed in Table~\ref{tab:init_case3}.

\begin{table}[ht]
\centering
\captionsetup{font=footnotesize, skip=6pt}
\caption{Simulation and vehicle parameters (Case~3)}
\setlength{\tabcolsep}{4pt}
\label{tab:case3_parameters}
\begin{threeparttable}
\begin{tabular}{lll}
\toprule
\textbf{Parameter} & \textbf{Description} & \textbf{Value} \\ \midrule
$T_{\mathrm{sim}}$  & Simulation duration                        & 5 s \\
$T_l$               & Low-level sampling period                  & 0.02 s \\
$T_h$               & High-level update period                   & 0.08 s \\
$d_{\mathrm{safe}}$ & Minimum longitudinal safe distance         & 6 m \\
$Q_{xy}$       & Covariance in $(x,y)$ dimensions           & diag\{0.9, 0.9\} \\
$\delta$            & Cumulative probability threshold           & 0.2 \\ \bottomrule
\end{tabular}
\end{threeparttable}
\end{table}
\begin{table}[h]
\vspace{-3mm} 
\centering
\captionsetup{font=footnotesize, skip=6pt}
\caption{Initial conditions (Case~3)}
\setlength{\tabcolsep}{4pt}
\label{tab:init_case3}
\begin{threeparttable}
\begin{tabular}{lllll}
\toprule
\textbf{Vehicle} & \textbf{Initial Lane} & $x_0$ (m) & $y_0$ (m) & $v_0$ (m/s) \\ \midrule
EV        & Lane 1  & 90  &  4  & 10 \\
SV 1 (blue)   & Lane 1  & 100 &  4  & 8 \\
SV-Right (green) & Lane 2 & 60  & 0  & 9 \\ \bottomrule
\end{tabular}
\end{threeparttable}
\end{table}

\subsubsection{Observations and Results}

Fig.~\ref{fig:case3_visual} illustrates the EV’s decision-making process during the stage of a lane-change maneuver under five different risk thresholds. Each subfigure corresponds to the system state at $t = 0$, 1, 2, 3, and 4 seconds, respectively. As shown in the figure, the choice of $\epsilon$ has a significant impact on the timing and spatial position of the lane change decision. Smaller values of $\epsilon$ prompt earlier decisions both temporally and longitudinally, reflecting a more conservative planning maneuver. Conversely, larger risk thresholds delay the maneuver initiation, as the planner tolerates a higher level of uncertainty before executing the action. A more quantitative comparison of the lane change timing and position under different risk levels is provided in Fig.~\ref{fig:case3_lanechange1}.

\begin{figure}[htbp]
   \captionsetup{font=footnotesize}
   \centering
   \includegraphics[width=0.49\textwidth]{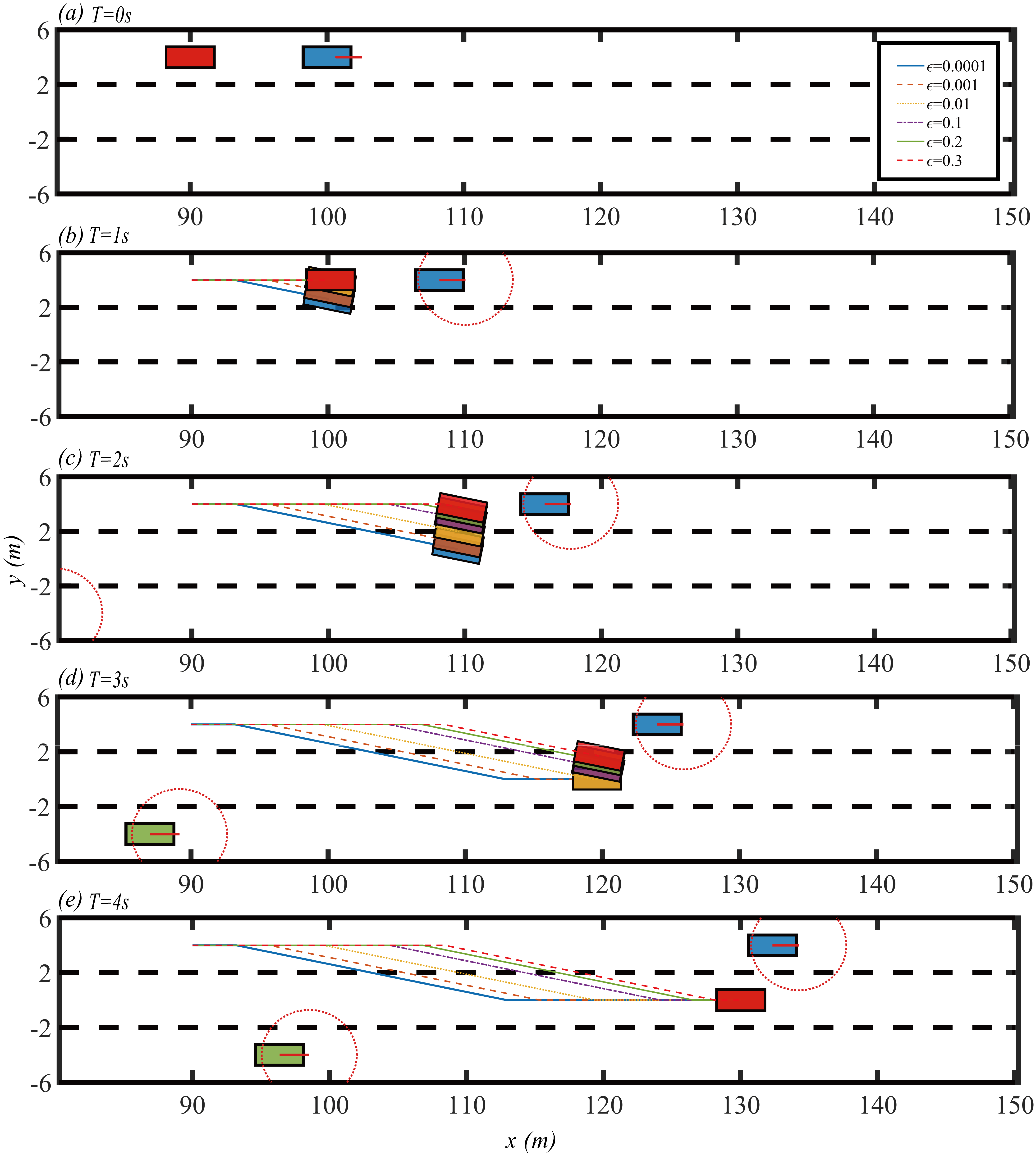}
   \caption{ Lane change decision under different risk thresholds $\epsilon$.
   Subfigures (a)-(e) correspond to $t = 0$-$4$~s, respectively.}
   \label{fig:case3_visual}
\end{figure}

\begin{figure}[htbp]
   \captionsetup{font=footnotesize}
    \centering
    \includegraphics[width=0.49\textwidth]{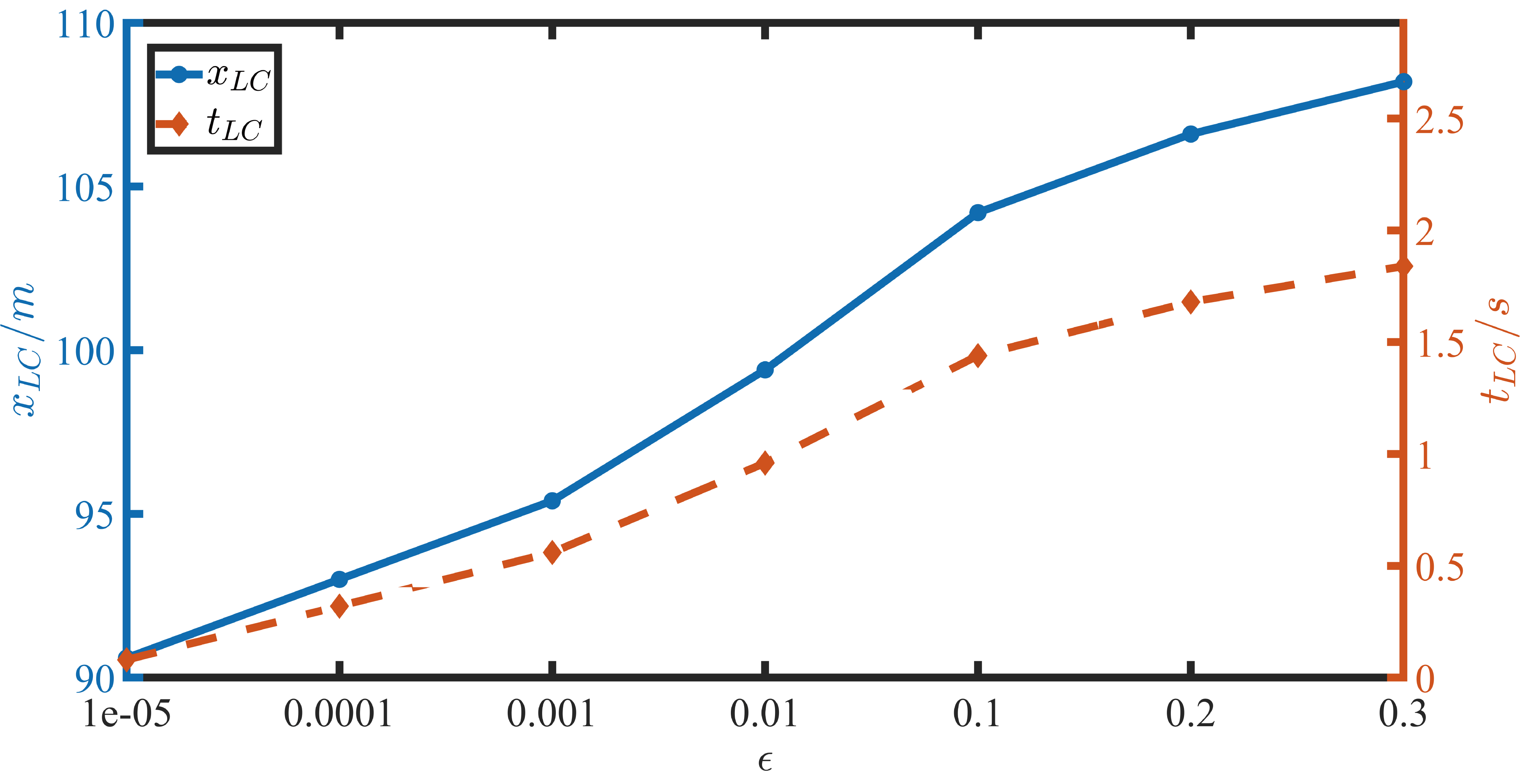}
    \caption{First lane change decision under different risk probabilities: lane change position $x_{\text{LC}}$ and decision time $t_{\text{LC}}$.}
    \label{fig:case3_lanechange1}
\end{figure}

In summary, Case 3 focuses on validating the framework’s risk-aware planning behavior under different safety confidence levels.  As the chance-constraint confidence parameter $\epsilon$ varies, the EV adapts its trajectory to trade off between conservativeness and efficiency, demonstrating that the proposed formulation enables explicit regulation of interaction risk in probabilistic terms.
\begin{remark}
    The policy of SVs are predefined and not the focus of this study. Accordingly, the cumulative probability thresholds $\delta_{\mathrm{seq}}$ is manually selected to balance scenario coverage. Small values are used in Case~1 and Case~2 to preserve all plausible behaviours, while a higher threshold is used in Case~3 to simplify analysis.

\end{remark}

Collectively, the three cases validate the proposed framework’s effectiveness across different layers of uncertainty and risk.
Case 1 and Case 2 demonstrate reliable decision-making under isolated lateral and longitudinal uncertainties, respectively, with Case 2 further highlighting the key difference in decision-making philosophy between the proposed framework and a conventional rule-based approach.
Case 3 verifies that the chance-constraint formulation can explicitly regulate probabilistic risk in the integrated optimization.
Together, these results confirm the framework’s capability for safe, adaptive, and risk-aware decision-making in complex driving environments.

\section{Conclusion}\label{V}

This paper presents a decision-making framework for autonomous systems operating under uncertainty. SAs are modeled using HMDPs, which capture both maneuver-level and dynamic-level uncertainty, thereby enabling multi-modal predictions of future behaviors. At the decision layer, an HMDP-MPC framework is developed that unifies the EA’s maneuver selection with dynamic feasibility within a single optimization process. Environment-derived multi-modal predictions are explicitly incorporated into the optimization via joint chance constraints to ensure probabilistic safety. Theoretical analysis establishes recursive feasibility and asymptotic stability of the proposed MPC scheme.

Simulation studies based on autonomous driving scenarios validate the effectiveness of the proposed framework in uncertain traffic environments and further investigate the influence of varying risk thresholds on decision-making behavior. Comparative evaluations against rule-based baselines highlight the fundamental differences in decision-making philosophy, demonstrating the proposed framework’s capability in anticipation and autonomous operation.

Future work will focus on probabilistic prediction of maneuver modes to learn more accurate distributions over MDP states, enabling more robust multi-modal reasoning and improving the reliability of decision-making in uncertain and dynamic environments.



\ifCLASSOPTIONcaptionsoff
  \newpage
\fi




\bibliographystyle{IEEEtran}
\bibliography{bibtex/bib/IEEEexample}









\end{document}